\newcommand{\mnote}[1]{}
\newcommand{\zxnote}[1]{}
\newcommand{\james}[1]{}
\newcommand{\adria}[1]{}
\newcommand{\todo}[1]{}
\def\ddefloop#1{\ifx\ddefloop#1\else\ddef{#1}\expandafter\ddefloop\fi}
\def\ddef#1{\expandafter\def\csname bb#1\endcsname{\ensuremath{\mathbb{#1}}}}
\def\ddef#1{\expandafter\def\csname c#1\endcsname{\ensuremath{\mathcal{#1}}}}
\def\ddef#1{\expandafter\def\csname v#1\endcsname{\ensuremath{\boldsymbol{#1}}}}
\def\ddef#1{\expandafter\def\csname v#1\endcsname{\ensuremath{\boldsymbol{\csname #1\endcsname}}}}
\newcommand{\enc}{\mathsf{Enc}}
\newcommand{\dec}{\mathsf{Dec}}
\newcommand{\Agg}{\ensuremath{\mathsf{Agg}\hspace{2pt}}}
\newcommand{\Mode}{\ensuremath{\mathsf{Mode}}}
\newcommand{\State}{\ensuremath{\mathsf{State}}}
\newcommand{\Store}{\ensuremath{\mathsf{Store}}}
\newcommand{\Reveal}{\ensuremath{\mathsf{Reveal}}}
\newcommand{\ideal}{\ensuremath{\mathcal{I}}}
\newcommand{\server}{\ensuremath{\mathcal{S}}}
\newcommand{\ugets}{\stackrel{u}{\gets}}
\newcommand{\HintMLWE}{\mathsf{HintMLWE}}
\newcommand{\MLWE}{\mathsf{MLWE}}
\newif\ifeprint
\newtheorem{remark}{Remark}
\newtheorem{corollary}{Corollary}
\newtheorem{theorem}{Theorem}
\newtheorem{lemma}{Lemma}
\newtheorem{definition}{Definition}
\begin{document}
\ifeprint
\title{Secure Stateful Aggregation:\\ A Practical Protocol with Applications in Differentially-Private Federated Learning} %
\else
\title{Secure Stateful Aggregation\\ \small A Practical Protocol with Applications in Differentially-Private Federated Learning\\} %
\fi

\ifeprint
\author{
Marshall Ball \\ {\small New York University} \and
James Bell-Clark \\ {\small Google} \and
Adria Gascon\\ {\small Google} \and
Peter Kairouz\\ {\small Google} \and
Sewoong Oh\\ \small University of Washington \\ \small Google  \and
Zhiye Xie\\\small NYU Shanghai
}

\else
\author{Marshall Ball\inst{1}\orcidID{0000-1111-2222-3333} \and
James Bell-Clark\inst{3} \and
Adria Gascon\inst{3} \and
Peter Kairouz\inst{3}\orcidID{2222--3333-4444-5555} \and
Sewoong Oh\inst{3,4} \and
Zhiye Xie\inst{2}\orcidID{1111-2222-3333-4444}}

\authorrunning{F. Author et al.}
\institute{New York University \email{marshall@cs.nyu.edu}\and NYU Shanghai \email{zhiye.xie@nyu.edu}\and
Google \email{\{jhbell,adriag,kairouz,sewoongo\}@google.com} \and
University of Washington
}
\fi

\date{}
\maketitle

\begin{abstract}
Recent advances in differentially private federated learning (DPFL) algorithms have found that using \emph{correlated} noise across the rounds of federated learning (DP-FTRL) yields provably and empirically better accuracy than using independent noise (DP-SGD). While DP-SGD is well-suited to federated learning with a single untrusted central server using lightweight secure aggregation protocols, secure aggregation is not conducive to implementing modern DP-FTRL techniques without assuming a trusted central server. DP-FTRL based approaches have already seen widespread deployment in industry, albeit with a trusted central curator who provides and applies the correlated noise.

To realize a fully private, single untrusted server DP-FTRL federated learning protocol, we introduce \emph{secure stateful aggregation}: a simple append-only data structure that allows for the private storage of aggregate values and reading linear functions of the aggregates. Assuming Ring Learning with Errors, we provide a lightweight and scalable realization of this protocol for high-dimensional data in a new security/resource model, \emph{Federated MPC}: where a powerful persistent server interacts with weak, ephemeral clients. We observe that secure stateful aggregation suffices for realizing DP-FTRL-based private federated learning: improving DPFL utility guarantees over the state of the art while maintaining privacy with an untrusted central party. Our approach has minimal overhead relative to existing techniques which do not yield comparable utility. The secure stateful aggregation primitive and the federated MPC paradigm may be of interest for other practical applications.
\end{abstract}

\section{Introduction}
The widespread use of deep learning on user-generated data, that is often sensitive, has made privacy-preserving techniques increasingly important. One prominent framework that has emerged for conducting privacy-preserving machine learning is differentially-private federated learning (DPFL).

\paragraph{Differentially-private federated learning framework.}
While the exact details of differentially-private federated learning protocols may vary widely, many such systems~\cite{mcmahan2018learningdifferentiallyprivaterecurrent} follow a similar architecture:
\begin{itemize}
    \item A single central party plays the role of conductor for the learning process, grouping users into cohorts and facilitating communication. We refer to this persistent and powerful party as \emph{the server}, but it need not be localized onto a single device.
    \item Lightweight, ephemeral client devices (such as phones) are grouped into ``cohorts''. We refer to such these parties as \emph{clients}.
\end{itemize}

When a client's cohort's time comes, the device downloads the current model state and uses the on-device data to compute a local update (in gradient descent, this involves computing local gradients). These local updates are then aggregated across the entire cohort with some noise. The server then uses this noisy aggregate to update the model.

With this communication architecture and resource allocation in mind, we introduce a new practical model for designing secure multi-party computation at scale: \emph{secure federated multiparty computation (FMPC)}.
This model/paradigm can be seen as a hybrid of two emerging trends in secure multiparty computation: combining ephemeral, stateless participants (Fluid MPC~\cite{C:CGGJK21}, YOSO~\cite{C:GHKMNRY21}) with a powerful persistent central (untrusted) party (Gulliver MPC~\cite{C:ANOS24}\mnote{recent efficient arithmetic MPC approaches with ``king'' parties}).

\paragraph{Learning with independent noise (differentially-private stochastic gradient descent) and secure aggregation.}
Early successes in DPFL were due to a learning framework known as \emph{differentially-private stochastic gradient descent (DP-SGD)}~\cite{song2013stochastic,bassily2014private,abadi2016deep,DBLP:conf/mlsys/BonawitzEGHIIKK19}. DP-SGD learners perform stochastic gradient descent, but at each step, they clip gradients to bound their influence and add independent Gaussian noise to the gradient update to preserve privacy \emph{throughout} the learning process. These algorithms exhibited favorable privacy/utility tradeoffs and were easy to adapt to the federated learning architecture using lightweight protocols for \emph{secure aggregation}~\cite{DBLP:conf/ccs/BonawitzIKMMPRS17, bell2020secure, acorn, ma2023flamingo, opa}. 

A secure aggregation protocol enables a server to learn a sum of vectors and nothing else. Practical secure aggregation protocols are characterized by the ability to scale with very high dimensional data and massive numbers of participants (in contrast to generic secure multiparty computation): they should have very limited interaction (and a simple, sparse communication pattern), almost no communication overhead for high dimensional inputs (relative to the privacy-free baseline of sending inputs in the clear), low computational complexity, and robustness to client dropouts.

Given a secure aggregation protocol, one can turn DP-SGD into a federated learning protocol via the following: 
\begin{enumerate}
    \item The server distributes the (differentially-private) state of the model to the current client cohort.
    \item Clients locally compute a gradient update add some locally sampled noise\footnote{The local noise contributions need not be wide enough to provide privacy on their own: they need only provide privacy in aggregate. This is critical to yielding useful utility/accuracy guarantees at scale.} and securely aggregate the result across the entire cohort.
    \item The server then uses the aggregated gradient to update the model. Because the output of the aggregation is differentially-private (due to the local noise contributions), the new model also preserves differential privacy.
\end{enumerate}

A number of examples of secure aggregation protocols exist in the literature~\cite{DBLP:conf/ccs/BonawitzIKMMPRS17,USENIX:BGLLMY23,ma2023flamingo,li2023lerna} and indeed been proposed for deploying DP-SGD-based federated learning at scale.

The use of such techniques has been refined in a series of papers~\cite{chen2022fundamentalpricesecureaggregation,chen2022poissonbinomialmechanismsecure,kairouz2022distributeddiscretegaussianmechanism,agarwal2021skellammechanismdifferentiallyprivate} and this work has seen production deployment~\cite{DDPFLblog}.

Unfortunately, due to the less than optimal\footnote{DP-SGD utility can be improved using privacy amplification via sampling or shuffling, but these techniques are infeasible in the federated learning setting where data arrives in an arbitrary order.} privacy/utility tradeoff of DP-SGD, such federated learning procedures can yield underwhelming accuracy guarantees.~\cite{abadi2016deep,tramer2020differentially,kairouz2021practical,xu2023federated,choquette2024amplified,choquette2023correlated}

\paragraph{A new paradigm for private learning: correlated noise (differentially-private follow-the-regularized-leader).}
In recent years, a new paradigm for private learning has emerged. At a very high level, it has been observed both provably and empirically that by adding \emph{correlated noise} in the training steps, the utility can be dramatically improved while preserving the same level of privacy \emph{throughout} the training process. In particular, in such mechanisms, the noise added at different training steps is \emph{not} independently sampled.

This new family of algorithms, differentially-private follow-the-regularized-leader (DP-FTRL), is similar to DP-SGD but instead of adding independent noise $\eta_i$ to the gradient in round $i$, instead adds $\langle \vlambda^i,\veta\rangle$ where $\vlambda^i$ is a public vector (associated with round $i$) and $\veta$ is a vector of independent noise samples. 

While we shall give a simple example illustrating how such correlated noise can help improve privacy/utility trade-offs later (see~\ref{sec:running-sum}), for now, observe that the straightforward template for securely realizing DP-SGD via secure aggregation does not work here. Critical to that implementation was the fact that noise samples were independent in each learning step and, hence, amenable to local sampling by clients before aggregation. 

To apply the DP-FTRL paradigm, the underlying noise vector $\veta$ must persist throughout the entire learning process across the life of many cohorts, and it is not clear how to efficiently do this with secure aggregation alone. To date, no one has successfully realized private federating learning via such an approach without leveraging untenable trust assumptions on the central server.

We introduce a new primitive, \emph{secure stateful aggregation}, that enables a seamless realization of the DP-FTRL approach to differentially private federated learning. We provide a simple, scalable secure stateful aggregation protocol in the federated multiparty computation setting.

\subsection{Our Results}

We begin by introducing our conceptual contributions: the secure stateful aggregation functionality and the federated multiparty computation model. Then, we will sketch our stateful aggregation protocol and illustrate its applicability with a simple example: computing private partial sums.

\paragraph{Secure Stateful Aggregation.}

Secure stateful aggregation is a reactive functionality that can be thought of as a simple append-only data structure with two operations:
\begin{enumerate}
    \item $\Store$: Appends the sum of current inputs to the data structure state.
    \item $\Reveal$: Outputs a linear function of the current data structure state to the server.
\end{enumerate}

The actual functionality rolls these two operations together, but for the sake of clarity we provide this equivalent, albeit less efficient presentation. We refer the reader to Section~\ref{sec:func} for further details on the functionality and Figure~\ref{fig:ideal} in particular.

A secure realization of this functionality reveals nothing beyond its input/output behavior. The state of the data structure and any aggregated inputs will remain private, up to the linear functions that are revealed.

It is easy to see that this functionality is a mild generalization of the secure aggregation functionality. Moreover, this functionality allows for the secure aggregation of data across many cohorts. Moreover, this pared-down formulation allows for extremely efficient and nonetheless suffices for powerful applications in federated learning.

\paragraph{The $(\gamma,\beta)$-secure federated MPC paradigm.}
As mentioned above, we introduce a new paradigm for designing MPC protocols that aligns closely with many large scale distributed protocol deployments.

Concretely, a federated MPC (FMPC) protocol is broken into a sequence of rounds.
A powerful stateful server persists throughout the computation and is capable communicating with all participants. 
In each round, lightweight ephemeral clients are scheduled to arrive in a cohort. These clients have limited communication and computational capability and can only participate for at most a few rounds, sometimes just one.

Communication takes place on a bulletin board with a PKI: clients can send secure, private messages to clients in the successive cohort, but metadata about messages is visible to all participants (even if their contents are not). However, given the massive number of clients, each client can only send and receive a few short messages with other clients.  While larger messages to (and from) the server are possible, this communication should also be as close to the information-theoretic minimum as possible.

We assume all communication is effectively synchronous.

The adversary can corrupt an $\gamma$-fraction of any given cohort in addition to the server. Additionally, a $\beta$-fraction of the clients in any given cohort may drop out, failing to complete their roles in the protocol. In this work, we consider a semi-honest adversary making static corruptions and guaranteed output delivery.

The FMPC paradigm can be seen as a hybrid of two emerging trends in MPC: protocols with ephemeral participants (such as Fluid MPC~\cite{C:CGGJK21} and YOSO~\cite{C:GHKMNRY21}) and protocols with a strong central party and very weak clients (such as GMPC~\cite{C:ANOS24}). Federated MPC is comprised of a powerful persistent central party with a massive number of weak ephemeral clients. In contrast to Fluid MPC and YOSO, where minimal interaction is prized above all else, we assume a single persistent party. In contrast to the GMPC, we assume a much more reliable and transparent communication infrastructure for clients, and on the other hand that the clients are shortlived and unreliable.

It is our hope that this loosely-defined paradigm will help bring theory and practice closer together, at least in certain settings.

\paragraph{A lightweight protocol from RLWE.}
We provide a simple lightweight realization of stateful aggregation in the federated learning setting. Our protocol scales well with high-dimensional data and massive client cohorts. %

The key ingredient is a (high-rate) linearly homomorphic secret key encryption scheme that also admits a kind of key homomorphism (enabling distributed encryption and decryption), which we instantiate via the Ring Learning with Errors (RLWE) assumption~\cite{lyubashevsky2010ideal}.

The high level idea is very straightforward. Throughout, a persistent global secret key is reshared from cohort to cohort using additive secret sharing (reminiscent of DC-nets~\cite{JoC:Chaum88}).
\begin{enumerate}
    \item To $\Store$ an aggregate: clients encrypt their private inputs using their share of the secret key and the server aggregates the ciphertexts to produce an encryption of the aggregate under the global key and appends the result to its state. This is possible due to the key homomorphism of the scheme.
    \item To $\Reveal$ a linear function of the secret state: the server homomorphically evaluates the linear function over ciphertexts it is holding. Then the clients use their secret shares to run a distributed decryption of the resulting ciphertext. 
\end{enumerate}
Ensuring that this does not inadvertently compromise semantic security of the state (up to the linear function output) is slightly delicate. An elementary committee-based approach is proposed to handle client dropouts. We refer readers to Section~\ref{sec:protocol} for more details.

In our protocol, communication to the server approaches the size of the aggregated elements. %
Clients send roughly $\kappa^2$ messages of length $\kappa$ to other clients, where $2^{-\kappa}$ is the desired security level. \mnote{check this is consistent description} The server is completely silent and hence the protocol is actually secure against a malicious server by default. (We only guarantee security against semi-honest clients.)  Precise benchmarking can be found in Section~\ref{sec:benchmark}.

Computational demands on clients are comparable to the communication costs.

\paragraph{Application 1: Releasing private partial sums (introduction to correlated-noise mechanisms).}~\label{sec:running-sum}
Let $x_i$ denote the value of the aggregated inputs in cohort $i$. Consider the task of releasing a differentially-private running sum: 
\[x_1, x_1+x_2, x_1+x_2+x_3,\ldots, \sum_{j=1}^i x_j,\ldots, \sum_{j=1}^N x_j \]

A naive mechanism for doing this (assuming a trusted central curator) is to simply add independent noise (be it Gaussian or Laplacian or otherwise) to each output above, treating each output as a single count mechanism. Then by composition, we can argue that the whole release is differentially private. The downside of this approach is that the input $x_1$ appears in every single output, which means (according to the DP composition theorem) that privacy will degrade by a factor of $N$. This means that to achieve $\epsilon$-privacy we need to use $\epsilon/N$-private noise on each component, which potentially drowns out the partial sums entirely. Note that it is unclear how to realize even this naive mechanism using secure aggregation, but it can be trivially realized using stateful secure aggregation: in each round, clients can store their aggregated inputs and noise samples separately and reveal their aggregatation of all inputs.

A clever mechanism introduced by Dwork et al.~\cite{DNPR10} (referred to as Tree Aggregation in the literature) proposes to instead think about a complete binary tree with $x_1,x_2,...,x_N$ at its leaves. Label each internal node with the sum of the nodes beneath it (so the root is labeled with $\sum_{j=1}^N x_j$). Now, simply add independent noise to the label of each node in the tree and release all the noisy labels. To estimate the $i$the partial sum, one can compute a linear function of the noisy labels.\footnote{A straightforward approach simply sums the roots of the any left children on path to $i$th leaf (and the $i$th leaf), but better approaches are possible~\cite{honaker2015efficient,kairouz2021practical}.} Because any input appears in at most $\log_2 N$ labels, to achieve $\epsilon$-privacy one need only use $\epsilon/\log N$ noise for each internal node.

Now, it is easy to see how to create a mechanism that releases this entire noisy tree using secure stateful aggregation. However, recall that the ultimate $i$ partial sum is a linear function, $\vlambda^i$, of the noisy internal labels. This estimator is constructed so that the output is equal to the partial sum plus a linear function of all the independent noise in the tree $\veta$ : 
\[ \sum_{j=1}^i x_j + \langle \vlambda^i,\veta\rangle.\]

From this perspective, we can imagine a new mechanism that samples persistent independent noise $\veta$ (possibly in an offline phase, although it is possible to construct this mechanism so the noise can be generated in an online manner, one new independent component per partial sum), then at step $i$ directly releases the value above. To realize this new mechanism via secure stateful aggregation is again quite straightforward: clients aggregate their inputs and separately aggregate a fresh noise sample. Then, the expression above can be released by computing the appropriate linear function of the state: releasing the value described above.\footnote{The advantage of this alternate approach (over releasing the entire tree), is that the partial sum estimator can be released in one shot as soon as the data is available. Releasing the whole noisy tree, while also possible and private, not only requires many releases at certain times but also requires more releases (leading to a higher cost overall).}

We can rephrase this last mechanism in linear algebraic terms. Let $\vC$ be the matrix that maps to cohort input aggregates $\vx=(x_1,\ldots,x_N)$ to tree labels as described above. Then let $\vB$ denote the matrix such that the $i$th row is $\vlambda^i$, the linear function that produces the $i$th partial sum estimator. Then, in this notation our mechanism will output 
\[\vB(\vC\vx +\veta) = \vy,\]
where $\vy=(y_1,\ldots,y_n)$ and $y_i = \sum_{j=1}^i x_j + \langle \vlambda^i,\veta\rangle$.

\paragraph{Application 2: Differentially-private federated learning via DP-FTRL.}

Kairouz et al.~\cite{kairouz2021practical} developed a new approach for training models with differential privacy using batch gradients. At a very high level their idea was that for convex optimization, one can bound regret by looking at a linear function of the loss of the model at time $i$. This means that the next model step can be computed using a linear function of the batch gradients seen thus far, essentially a partial sum of these gradients. Thus training a model privately effectively reduces to building a mechanism for releasing iterative partial sums as in Application 1 above! 

Thus we can effectively use the same mechanism, albeit with a different choice of matrices $\vC$ and $\vB$, as the one described above to perform state-of-the-art differentially private learning at scale. In particular, we will choose $\vB$ and $\vC$ such that $\vA=\vB\vC$ where $\vA$ is the linear function that maps the sequence of batch gradient described above. A ``good'' factorization of $\vA = \vB\vC$ yielding nearly optimal privacy/utility tradeoff for this paradigm can be found using semi-definite programming.~\cite{kairouz2021practical,choquette2024amplified} The stateful secure aggregation then need only output 
\[\vB(\vC+\veta)= \vA\vx+\vB\veta.\]
At the $i$th training step, this is simply $\vA^i\vx+\vB^i\veta$ where $\vA^i,\vB^i$ are the $i$th rows of $\vA$ and $\vB$, respectively.

This new approach has been shown to yield dramatic improvements over other methods such as DP-SGD approaches, both theoretically~\cite{kairouz2021practical,choquette2023correlated} and empirically~\cite{kairouz2021practical,choquette2024amplified}. Moreover this is already being deployed at scale for private next word prediction in the Google Keyboard~\cite{xu2023federated}, albeit with a trusted server.\footnote{User gradients in this process are aggregated securely, yielding some privacy guarantees. However, while the model outputs preserved differentially-privacy to external parties and users, Google's internal view was not differentially-private. By using our secure stateful aggregation, the view of \emph{all} parties remains differentially-private.}

\subsection{Limitations and Future Directions}

We suspect that the protocols listed here are maliciously secure, assuming the presence of a PKI and a means of choosing which parties should take part in each round. However, in practice those would be very big assumptions in the presence of a malicious server so we have not prioritized showing this formally. Demonstrating malicious security and finding ways to integrate with realistic ways to resist Sybil attacks would be a useful contirbution.

Our implementation of MF-DP-FTRL requires the matrix $\vC$ to be banded. In the central model concurrent work has shown that is is possible to work with Buffered Toeplitz matrices~\cite{mcmahan2024hasslefreealgorithmprivatelearning} giving slightly better results. That approach however doesn't interact well with discretization, so we cannot straight forwardly extend to it. Finding a way to make that work is a possible future direction.

Of course it isn't obvious that the communication and computation couldn't be reduced by practically meaningful constant factors by some other approach. This could also be a aim of future work.

\section{Stateful Aggregation}\label{sec:func}

The system consists of a server and a sequence of $r$ cohorts of clients. The $i$th cohort $C_i$ is available to make a submission at time step $i$. It must also know the public keys for $C_{i+1}$ and $C_{i+2}$ at that time, thus $C_i$ must have been chosen and provided their public keys by time $i-2$.

In each step, multiple aggregations can be conducted. Each of these aggregations computes a linear combination of inputs from the clients that step and values stored in the protocol's memory in previous steps. The result of each aggregation can either be stored in memory or revealed to the server.
For notational clarity we will assume that exactly one aggregation is conducted per step.

An aggregation instruction has three arguments: one indicating whether the result should be revealed, taking values in $\{\Store,\Reveal\}$; a rule explaining what value each client should provide for aggregation, taking values in a set $\mathcal{I}$; and the weights to be applied to stored values for inclusion in the aggregation which form a value in $\mathbb{F}_q$ for some prime $q$. An aggregation in the $i$th round can thus be written as either $\Agg(\Store,I_{i},\{\lambda_{i,k}\}_{k<i})$ or $\Agg(\Reveal,I_{i},\{\lambda_{i,k}\}_{k\leq i})$. The input rule $I_{i}$ can be any object that the clients know how to use to derive their inputs to the specific aggregation, we will write $\vx_{i,j}$ for the input that the $j$th client in $C_i$ derives from $I_{i}$. A program $P$ for our system consists of a sequence of aggregations, one per round. The $i$th entry in the sequence being the  aggregation for the $i$th cohort to perform. We say a program of that form is valid if 
the dependency graph defined by input and outputs of its instructions is acyclic.

We make two simplifying assumptions, neither is a hard restriction but they will keep things simpler and hold for the applications we have in mind. Firstly, we assume that the program is known in advance, although it would be possible to decide the $i$th entry of the sequence after the $(i-1)$th cohort have spoken. In the malicious server case the $i$th cohort would have to check that the $i$th entry of the sequence was in some sense legitimate or the adversary could ask for the submission of secrets it shouldn't learn, but this would be doable for many applications. Secondly, as mentioned above, each cohort conducts at most one $\Store$ and one $\Reveal$ aggregation. Let $v_i$ be the value stored or revealed in the $i$th round. We define the following ideal functionality.

\begin{figure}\small
\begin{tcolorbox}[enhanced, title=Stateful Aggregation Functionality $\ideal$]
\begin{enumerate}
\item[] \hspace{-0.4cm} \textbf{Public Parameters:} 
\begin{itemize}
\item A program to be executed $P := \big\langle t_1, \ldots, t_r  \big\rangle$.
Each instruction $t_{i}$ is of the form  $(\Mode_i, I_{i}, \{\lambda_{i,k}\}_{k<i})$, 
with $\Mode\in \{\Store, \Reveal\}$. 
\item Vector length $\ell$, input domain $\mathbb{F}^\ell$.
\end{itemize}
\item[] \hspace{-0.4cm} \textbf{Parties:} 
\begin{itemize}
    \item A Server $\server$.
    \item A sequence of (not necessarily disjoint) cohorts $C_1, \ldots, C_r$ of clients. Each cohort contains $n$ clients. We denote by $C_{i,j}$ the $j$th client in the $i$th cohort. Each client holds a private database $D_{i,j}$. %
    \item A trusted party $\ideal$ holding state $\State := \langle (v_i) \in \mathbb{F}^{\ell}\rangle_{i\in [\leq r]}$, initially $\langle \rangle$ (empty).
\end{itemize}
\item[] \hspace{-0.4cm} \textbf{Private Parameters:} Client $C_{i,j}$ may hold private input $\vx_{i,j}:= I_i(D_{i,j})$ for round $i$.
\item[] \hspace{-0.4cm} \textbf{Functionality:} 
\item[] For each round $i\in [r]$:
\begin{itemize}
\item Let $(\Mode_i,I_i,\{\lambda_{i,k}\}_{k<i}) = t_i$.
\item Each client $C_{i,j}$ sends $\vx_{i,j}$ to $\ideal$.
\item $\ideal$ computes $v_i := \sum_{j\in [n]} \vx_{i,j} + \sum_{k<i} \lambda_{i,k}v_k$.
\item $\ideal$ updates $\State := \State~||~ v_i$.
\item If $\Mode_{i-1} = \Reveal$ then $\ideal$ sends $v_{i-1}$ it to $\server$.
\end{itemize}
\end{enumerate}
\end{tcolorbox}
\caption{The Stateful Aggregation Functionality}
\label{fig:ideal}
\end{figure}

\paragraph{Long running aggregation.} As a basic example, consider the case where the server just wants to compute the sum of the private input from clients across different round. Let $\texttt{input}$ be a rule that just returns the client's input to the server. In this case each cohort except the last one will run $\Agg(\Store,\texttt{input},\{\mu_{i,k} := 0\}_{k<i})$ . The final cohort will run only $\Agg(\Reveal,\texttt{input},\{\nu_{i,k} := 1\}_{k<i})$.

\section{Applications of Stateful Aggregation in Distributed Differential Privacy}

In this section, we describe a couple of programs for releasing prefix sums with distributed differential privacy using our system. We have a sequence of cohorts, each client in these cohorts has a vector as input, let the sum of inputs in cohort $i$ be $x_i$. The server should after each cohort receive an estimate of the sum of all inputs from all clients who have submitted so far, which we call $S_i$, i.e. after cohort $i$ it should receive an estimate of $S_i := \sum_{j\leq i} x_j$. This is useful in federated learning, where the inputs are user contributed updates to a model and the current parameters are the sum of the initial parameters and all inputs so far.

In both cases the noise will be added by the clients making the submissions and for privacy we will assume that at most a fraction $\gamma$ of them are corrupt (which is also an assumption for the security of the protocol so this is no extra assumption).

Following a standard trick, when we want a Gaussian random variable with variance $\alpha^2$ for differential privacy, each client can provide Gaussian noise with variance $\alpha^2/n(1-\gamma)$. The sum of honest contributions will then have the correct variance. For privacy purposes any extra noise added can be considered post-processing. The effect of this approach is to inflate the variance of the added noise by a factor of $1/(1-\gamma)$, this will be small compared to our other gains.

Throughout we assume that each client's input has a bounded sensitivity and that in order to achieve the required DP epsilon locally the required variance of Gaussian noise (for each coordinate) would be $(1-\gamma)\sigma^2$ to each entry. The choice of $\sigma$ will in practice depend on sensitivity, $\gamma$ and $\epsilon$, but the total noise in all of the following methods will depend on those parameters only through $\sigma$, thus this will simplify presentation.

\subsection{Baselines}

If we don't require differential privacy we can have a secure aggregation for each cohort providing the sum $x_i$ of that cohort's inputs. This provides the server with the difference between their output for this round and their output for the previous round, which they would learn anyway and from which they can compute their current output $\sum_{j\leq i} x_i$ as $x_i$ plus their previous output $\sum_{j<i}x_i$.

This extends naturally to the baseline differentially private idea of using a separate secure aggregation with each cohort to produce $x_i+z_i$ where the $z_i$ is a Gaussian noise with variance $\sigma^2$. However, the prefix sum that the server outputs in the $j$th round is $\sum_{i\leq j} x_i+z_i$ which has noise $\sum_{i\leq j} z_i$ which has variance proportional to $j\sigma^2$, we can do better than this.

\subsection{Prefix Tree Aggregation}

Dwork et. al.~\cite{DNPR10} suggested the following DP mechanism that has since come to be known as Tree Aggregation. Suppose we have $2^h$ cohorts for some integer $h$. Assign the cohort's inputs $x_i$ to the leaves $l_i$ of the tree from left to right. Assign Gaussian noise samples $z_i$ to each of the left child nodes in the tree and to the root, such that $z_i$'s node $n_i$ has $l_i$ as its rightmost descendant (about half the time $n_i=l_i$). Let $\upsilon(i)$ be the index of the leftmost descendant of $n_i$. At time step $i$ the curator calculates $r_i = z_i+\sum_{j=\upsilon(i)}^i x_j$, that is the true sum of the leaves descended from $n_i$ masked by the noise at $n_i$. It then adds the output from step $\upsilon(i)-1$ (or nothing if $\upsilon(i)=1$) and outputs the result.

As the full sequence of outputs is a post processing of the $r_i$ it suffices to show that the $r_i$ are cumulatively DP. As each input $x_i$ has at most $h$ ancestors amongst the $n_i$, it is included in at most $h$ of the $r_i$. By advanced composition it is thus sufficient for each of the noises to have variance $O(k\sigma^2)$. Each output is the sum of at most $h$ of the $r_i$ and thus the variance of the noise on each output is $O(h^2\sigma^2)$ i.e. $O(\log_2(r)^2\sigma^2)$. The constants resulting from this protocol can be optimized at the cost of extra computation was described by Honaker~\cite{honaker2015efficient}, we will not bother implementing a version of that as it is more complicated (so not interesting for exposition) and superfluous given MF-DP-FTRL.

We now provide a program $P=\langle t_1,...,t_{2^{h+1}}\rangle$ for our functionality that will implement the above. In this program the odd numbered cohorts will not provide data, only Gaussian noise to be used for DP and the even numbered cohorts will be the cohorts providing the noise, thus $x_i$ will be uploaded in round $2i$.

We remark that there is no reason why the same physical devices couldn't play the role of cohorts $2i-1$ and $2i$, further this would avoid the cost of transferring the key for that change, that would further allow the work for the two rounds to be done in parallel. This is probably how the protocol would be run in practice but we haven't explained the details here to keep the exposition and interface simple.

Let $G_{\sigma^2}$ be a function that generates and returns a vector in $\mathbb{F}^l$ of discrete Gaussians with variance $\sigma^2/n$. In round $2i-1$ we will have the cohort generate and store the noise $z_i$ that is we take $t_{2i-1}=(\Store,G_{\sigma^2},{0}_k)$. In round $2i$ we will have the server learn the $x_i$ plus the noise that we want applied to $S_i$ minus the noise that was applied to $S_{i-1}$. The server can then add this to its previous output to get the output for round $i$.

Let $2^{h_i}$ be the largest power of two dividing $i$. The difference of the noise for $S_i$ and $S_{i-1}$ is given by $z_i-\sum_{d=0}^{h_i-1}z_{i-2^d}$. We let $I$ map a client's data to the input we want them to provide to the aggregation and define $\lambda_{2i,k}=-1$ if $k\in \{2i-2^d-1|d\in\{1,...,h_i\}\}$ and $\lambda_{2i,k}=0$ otherwise. Then the even indexed instructions in our program are given by $t_{2i}=(\Reveal,I_{2i},\{\lambda_{2i,k}\}_{k<2i})$.

\subsection{MF-DP-FTRL}

The state of the art in central model DP federated learning is given by the matrix factorization approach~\cite{choquette2024amplified}. We now describe an outline of this procedure and the optimizations provided in that paper.

Define $\vA$ to be the lower triangular matrix with all entries (on and below the diagonal) equal to one. Note that if $\vx$ is a vector (of vectors) with the $i$th cohort's contribution in the $i$th place then the task we are aiming for is to estimate $\vA\vx$ in a streaming fashion.

The matrix factorization in the name is of $\vA$ into two components $\vA=\vB\vC$. The calculated result will be given by $\vB(\vC\vX+\veta)$ for some Gaussian noise $\veta$. To prove that this is DP it is enough to show that $\vC\vX+\veta$ is DP, which is done by requiring $\vC$ to have Frobenius norm at most one and setting the variance of $\veta$ to be the same as would make $\vX+\veta$ DP. Thus the factorization is usually chosen to minimise $\vB\veta$ subject to the bound on the norm of $\vC$. This optimization is then done numerically and results in substantial practical improvements over the Tree Aggregation idea above.

In practice it is also important that the necessary matrix multiplications can be calculated efficiently in an online fashion. Efficiently here largely means that the server doesn't want to have to store $\Omega(r)$ vectors in memory at any point. To achieve this it is recommended to choose $\vC$ to be banded with band width $b$. The result can then be calculated as $\vA\vC^{-1}(\vC\vX+\veta)$ using online algorithms for multiplying by a known banded matrix or its inverse that each require storing only $b$ vectors at any point (these are given by Algorithms~8~and~9 in~\cite{choquette2024amplified}). Adding the restriction that $\vC$ is bounded is shown numerically to lead to little loss in utility and so this restriction is recommended.

In our protocol we will also use the fact that $\vC$ is banded to get an efficient protocol in runtime and storage in much the same way. We will add one more restriction on $\vC$ which is that we require it to be discrete. This is a minimal change because we can discretize at any fixed level of precision. Thus we propose optimizing $\vC$ over the reals as in the central model and then rounding each entry in $\vC$ to discrete values. This may increase the Frobenius norm of $\vC$ to slightly more than $1$, if this happens then rounding down some of the entries that were barely rounded up should bring it back down without significantly damaging the fidelity of the approximation. If all rounding is toward $0$ then the Frobenius norm will not increase and the fidelity will still be good for sufficiently fine discretizations.

We note that the multiplication by $\vA\vC^{-1}$ can be considered post processing and so can be done in the clear. It is enough to implement the online processing of $\vC\vX+\veta$ using our system.

Again to have $i$ cohorts provide inputs we will run a program $P$ with $2i$ instructions. The same possibilities for combining these rounds apply as in the Tree Aggregation case. In the Tree Aggregation case where the noise was stored and then applied to each input as it was revealed. In this case we will store the inputs in the odd numbered rounds and then in each even numbered round reveal a new instance of noise with the appropriate linear combination of the inputs on top. The odd indexed instructions are thus $t_{2i-1}=(\Store,I,\{0\})$ and the even ones are $t_{2i}=(\Reveal,G_{\sigma^2},\{\lambda_{2i,2k-1} = C_{i,k}\}_{k\leq i})$.

The outputs from this Program can then be scaled back from fixed point to floating point encodings and then online multiplied by $\vB=\vA\vC^{-1}$ as in the central model.

\section{Realizing Secure Stateful Aggregation}\label{sec:protocol}
For clarity, we begin by describing a protocol for securely realizing this functionality in the fully-synchronous (or no client dropouts) semi-honest setting, and provide some intuition for its security. (A formal security proof for this protocol can be found in Appendix~\ref{sec:no_dropout-proof}) This setting captures the key ideas in realizing stateful secure aggregation. In Section~\ref{sec:dropout}, we describe how to augment this basic protocol to achieve resilience to client dropouts.

Before continuing, let us recall the stateful secure aggregation functionality. A stateful secure aggregation program consists of a sequence of instructions $t_i=(\Mode_i,I_i,\vlambda^i)$ and maintains an append-only data structure whose state at time $i$ we denote $\vv$. 

At time $i$ when instruction $t_i$ is executed, all clients currently present in cohort $C_i$ ($|C_i| = n$) submit their inputs $\vx_{i,1},\dots,\vx_{i,n}$ to the server. Then the sum of these inputs with a linear function, $\vlambda^i \in \bbZ_q^*$ for some prime $q$, of the prior state is appended to the state: $\vv \gets \vv||v_i$ where $v_i = \sum_{j\in [n]} x_{i,j} +  \langle\vlambda^i,\vv\rangle$. 

If $\Mode_i=\Store$, then this is all that happens. No output is produced. Otherwise, if $\Mode_i=\Reveal$, then value just appended to the state, $v_i$, will be released in the next time step.

We will show how to securely implement this functionality in the presence of a semi-honest adversary who can corrupt at most an $\gamma$-fraction of any cohort and the central server. We then show how to augment this protocol to achieve robust correctness (and security) guarantees in the presence of a fail-stop adversary who can force corrupt clients to drop out.

\subsection{Preliminaries: Linearly Homomorphic Encryption with Distributed Encryption/Decryption via (R)LWE}
The key ingredient in our scheme is a simple symmetric key encryption scheme based on learning with errors (LWE) assumptions that admits both key and message homomorphism. In particular, given a key $\vA,\vs$ where $\vA$ can be public, a message $\vx$ is encrypted as
\[
\enc_{\vA,\vs}(\vx) \to \vA\vs + T\ve + \vx
\]
where $e$ is sample from some appropriate small noise distribution and $T$ is an appropriately chosen scalar.

To decrypt a ciphertext $\vc=\vA\vs + T\ve + \vx$, one subtracts $\vA\vs$ and removes the noise:
\[
    \dec_{\vA,\vs}(\vc) = \vc - \vA\vs \mod T = (\vA\vs + T\ve + \vx) - \vA\vs \mod T = T\ve +\vx \mod T = \vx
\]

The first key property that we will rely on is linear message homomorphism: given encryptions of $\vx$ and $\vy$ under the same secret key (but possibly different public keys), one can produce an encryption of $a\cdot\vx+b\cdot\vy$ (albeit with respect to a different public key).
\[
    a\cdot(\vA\vs + T\ve + \vx) + b\cdot(\vB\vs+T\vf +\vy) = (a\vA+b\vB)\vs + T(a\ve+b\vf) +a\vx+b\vf
\]
So long as the coefficients $a$ and $b$ are appropriately bounded (and hence the noise $a\ve+b\vf$ and message $a\vx+b\vy$ are not too large), this can be correctly decrypted.

The second key property we rely on is key homomorphism, which enables a form of distributed encryption and decryption. In what follows, imagine Alice and Bob are holding $\vs_1$ and $\vs_2$ additive shares of the secret key $\vs$ such that $\vs_1 +\vs_2 = \vs$.

To compute a distributed encryption of the sum of their inputs ($\vx_1,\vx_2$ respectively), Alice can send $\vA\vs_1+\vx_1+T\ve_1$ and Bob can send $\vA\vs_2+\vx_2+T\ve_2$. The server can sum the result to get an encryption of $\vx_1+\vx_2:$ $\vc=\vA\vs+(\vx_1+\vx_2) + T(\ve_1+\ve_2)$. So long as $\ve_1+\ve_2$ is small (which is the case if $\ve_1$ and $\ve_2$ are small, $\vc$ can later be correctly decrypted.

Now to see how key homomorphism enables distributed decryption, imagine the server is holding a cipher text $\vc=\vA\vs+T\ve+\vx$. Now, Alice and Bob can simply compute and send $\vA\vs_1$ and $\vA\vs_2$ respectively. This enables the server to recover $\vx$
\[
    (\vA\vs + T\ve + \vx) - \vA\vs_1 - \vA\vs_2 \equiv_T \vx
\]
Unfortunately, this also allows the server to recover $\ve$, and in turn $\vs$. While this may be ok in a one-time scenario, we will require a distributed decryption that only reveals ``safe'' leakage on $\ve$ (or following the terminology of Lee et al.~\cite{ePrint:LeeKKSSC18,ACISP:CheonKKLSS21,USENIX:BGLLMY23}: hints about $\ve$) that won't compromise $\vs$. Following Bell et al.~\cite{USENIX:BGLLMY23}, we note that semantic security on correlated ciphertexts can be preserved without impinging upon correctness if Alice and Bob add some  noise to their messages (effectively sending encryption of 0 using their private keys):
\[
    (\vA\vs + T\ve + \vx) - (\vA\vs_1 - T\ve_1) - (\vA\vs_2 - T\ve_2) = \vx + T(\ve+\ve_1+\ve_2) \equiv_T \vx
\]
The server now can learn $\ve+\ve_1+\ve_2$, but as shown in~\cite{ePrint:LeeKKSSC18,USENIX:BGLLMY23} this preserves semantic security ($\vu$ is uniformly random below):
\[
    (\vA\vs+T\ve,\ve+\ve_1) \approx (\vu, \ve+\ve_1)
\]

Before continuing, we note that these properties are satisfied by other encryption schemes.\footnote{For example, our framework can be instantiated with ElGamal (provided one sufficiently constrains the message space---which suffices for our applications). However, this significantly degrades the complexity of communication relative to the RLWE-based approach.} However, in our study, the scheme below instantiated with Ring Learning with Errors yielded the best practical parameters.

\subsection{Fully-Synchronous Semi-Honest Protocol (No dropouts)}
We begin by showing how to securely realize the stateful secure aggregation functionality in the absence of client dropouts. We will informally describe this protocol and give intuition for its security. The formal description of the protocol can be found in Figure~\ref{fig:server_no_dropout} and Figure~\ref{fig:client_no_dropout}. A formal security proof can be found in Appendix~\ref{sec:no_dropout-proof}.

The high-level idea of our secure stateful aggregation protocol is relatively straightforward. 

\paragraph{A persistent secret key.} At the outset, clients in the first cohort, $C_1$, locally and independently sample uniformly random secret keys, $\vs_{1,1},\ldots,\vs_{1,n}$. This implicitly defines a global secret key $\vs = \sum_{j=1}^{n} \vs_{1,j}$. Throughout the protocol, we will maintain the invariant that the clients of any particular cohort are holding an additive secret sharing of $\vs$. 

To do this, we use a simple trick reminiscent of Chaum's dining cryptographers~\cite{JoC:Chaum88}. If the $j$th client in cohort $C_i$ is holding a share $\vs_{i,j}$, that client simply additively shares $\vs_{i,j}$ into $\vs^1_{i,j},\ldots,\vs^d_{i,j}$ such that $\vs^1_{i,j},\ldots,\vs^d_{i,j}$ are uniform conditioned on $\sum_{k=1}^d \vs_{i,j}^{k}=\vs_{i,j}$. Then $C_{i,j}$ sends those shares to $d$ randomly chosen clients in the next cohort. Provided that every honest client
sends a message to some other honest client and receives at least one message from some other honest client, $C_{i,j}$'s share $\vs_{i,j}$ remains perfectly hidden. Clients of the next cohort simply sum up the shares they receive to produce their own share. In particular, if the $k$th client in cohort $i+1$, receives $\bar{\vs}^1,\ldots,\bar{\vs}^{d'}$, then its share of the secret key will be $\vs_{i+1,k}=\sum_{j=1}^{d'} \bar{\vs}^j$. 

It is easy to verify that the invariant is maintained and so long as $d$ is sufficiently large, no honest party's secret key share will be compromised.

We will use this persistent secret key to encrypt the state of the protocol, $\vv=(\vv_i)_{i \leq r}$. In particular, the server will hold an ever growing sequence of ciphertexts $\vv=(\hat{\vv_i})_{i \leq r}$ where the $i$th ciphertext is an encryption of $\vv_i$ relative to a public matrix $\vA_i$ and the global secret key $\vs$. 

\paragraph{Writing to the secret state.} Having established how to maintain private random keys that sum to the same key $\vs$ at any given time, we next describe a simple mechanism for updating an encrypted state held by the server. Each client $j$ in cohort $C_i$ simply uses their secret key share $\vs_{i,j}$ to encrypt their input $\vx_{i,j}$. The clients then send these ciphertexts to the server. The server, holding an encryption of the old state $\vv$, uses the linearly holomorphic property of the encryption scheme to compute an encryption of $\langle \vlambda^i,\vv\rangle$. The server then simply sums the resulting correlated ciphertext with all ciphertext received from cohort $C_i$. The result, an encryption of $\langle \vlambda^i,\vv\rangle + \sum_{j=1}^{n} \vx_{i,j}$, is then appended to its encrypted state.

\paragraph{Revealing parts of the secret state.}
If  $\Mode_{i-1}=\Reveal$, then we will use the distributed decryption property to open the last part of the state. Namely, clients in cohort $C_i$ send messages for distributed decryption of the last part of the state. The server then uses these messages to reveal that part of the state.

\paragraph{Security intuition.} Arguing security amounts to proving that these distributed decryptions of homomorphically evaluated ciphertexts preserve semantic security of the underlying ciphertexts sent by clients when writing to the state, up to some linear constraints, even when this is done repeatedly.

We do this via a hybrid argument over the individual messages, but the step is in arguing that opening linear functions of a sequence of ciphertexts is indistinguishable from uniform (up to the outputs of the linear functions). We argue this by reducing to a generalization of $\mathsf{HintMLWE}$ introduced by Kim et al.~\cite{C:KLSS22} (Def.~\ref{def:HintMLWE}\footnote{Our definition is slightly different than that of Kim et al.~\cite{C:KLSS22}: we consider a variant with uniformly random secrets (as opposed to Gaussian) but do not reveal leakage on the secret.} ) that allows for multiple leaks or hints on the noise:
\[
    (\vA\vs + T\ve,\ve+\vf^1,\ldots,\vf^\ell)\approx (\vu,\ve+\vf^1,\ldots,\ve+\vf^\ell).
\]
By a direct reduction (see Corollary~\ref{cor:linear-leakage} for precise statement and details), this implies that for $\vlambda^1\ldots\vlambda^\ell$ there exist some noise distributions ${\vf}^1,\ldots,{\vf}^\ell$ such that
\begin{align*}
    \bigg(\underbrace{\vA_1,\ldots,\vA_r}_\text{public matrices}, \underbrace{\vA_1\vs+T\ve_1 + \vx_1,\ldots,\vA_r\vs+T\ve_r+\vx_r}_\text{server's encrypted state (aggregated $\Store$ messages)},  \underbrace{\sum_{i=1}^r\vlambda^{1}_i(T{\vf}^1-\vA_i\vs),\ldots,\sum_{i=1}^r\vlambda^{\ell}_i(T{\vf}^\ell-\vA_i\vs)}_\text{client's distributed decryptions (aggregated $\Reveal$ messages)}\bigg)\\
    \qquad \approx \bigg(\underbrace{\vA_1,\ldots,\vA_r}_\text{public matrices}, \underbrace{\vu_1,\ldots,\vu_r}_{\text{uniform}}, \underbrace{\sum_{i=1}^r\vlambda^{1}_i(T{\vf}_i^1+T{\ve}_1-\vu_i)+ \langle \vlambda^1,\vx\rangle,\ldots,\sum_{i=1}^r\vlambda^{\ell}_i(T{\vf}_i^{\ell}+T{\ve}_1-\vu_i) + \langle \vlambda^\ell,\vx\rangle}_\text{simulated decryption aggregation}\bigg)
\end{align*}
The first $r$ components are the public matrices in both distributions. The second $r$ component can be thought of as the server's encrypted state (the sum of the $\Store$ messages). The third $\ell$ components are distributed decryptions (the sum of the $\Reveal$ messages).

Critically note that if one considers the function
\[
    \phi^i: (\vA_1,\ldots,\vA_r,\vb_1,\ldots,\vb_r,\vc_1,\ldots,\vc_{\ell}) \mapsto \vc_i + \sum_{j=1}^r \vlambda^i_j\vb_j,
\]
then $\phi^i$ applied to the either distribution yields
\[ \sum_{j=1}^r \vlambda^i_j\vx_j + T\left(\sum_{j=1}^r\vlambda^i_j(\vf^i_j+\ve_j)\right).\]
Thus, provided $\vlambda^i$ is appropriately bounded, the server can correctly recover $\sum_{j=1}^r \vlambda^i_j\vx_j$.

On the other hand, the indistinguishability of these two distributions means that (provided the client's messages are securely aggregated) nothing is leaked to the server about the state beyond precisely $\langle \vlambda^1,\vx\rangle,\ldots,\langle \vlambda^\ell,\vx\rangle$. From there is simply a matter of arguing that the aggregation is secure using a hybrid argument (similar to Bell at al.~\cite{USENIX:BGLLMY23}).

This is summarized in the following theorem:
\begin{theorem}
    Assuming that a semi-honest PPT adversary corrupts at most an $\gamma$-fraction of any user cohort, in addition to the server, 
    The protocol given in Figures~\ref{fig:server_no_dropout},~\ref{fig:client_no_dropout} securely implements the functionality in Figure \ref{fig:ideal}.
\end{theorem}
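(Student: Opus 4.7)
The plan is to prove security via a standard simulation-based argument. The simulator $\Sim$, given only the corrupted parties' inputs and the $\Reveal$ outputs delivered by $\ideal$ (Figure~\ref{fig:ideal}), must produce a joint view of the corrupted server and of the $\gamma$-fraction of corrupted members in each cohort that is computationally indistinguishable from the real execution. Concretely, $\Sim$ must simulate (i) the key-reshare messages sent from honest clients in cohort $C_i$ to corrupt clients in cohort $C_{i+1}$, (ii) the encryption messages sent from honest clients to the server during $\Store$ instructions, and (iii) the distributed decryption shares sent from honest clients to the server during $\Reveal$ instructions.

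I would proceed via a sequence of hybrids. The first block of hybrids handles the key-share messages. Because each honest client splits its share into $d$ uniform additive pieces and sends them to randomly selected recipients in the next cohort, the joint distribution of shares received by corrupt members is uniform conditioned only on the linear constraint that they sum (together with the shares already held by corrupts) to the corrupt portion of $\vs$. Provided every honest client has at least one honest sender and at least one honest receiver in its neighbourhood -- an event that holds with overwhelming probability for suitable $d$ under the corruption bound $\gamma$ -- this view is statistically indistinguishable from uniform subject to the aforementioned constraint, and $\Sim$ can sample it directly. A routine bad-event bound takes care of the failure probability.

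The second block of hybrids replaces the messages sent to the (corrupt) server. Aggregating the honest contributions in each round and subtracting out the publicly derivable corrupt contributions, the server's view of honest messages collapses into (a) one aggregated $\Store$ ciphertext per round of the form $\vA_i\vs_i^H + T\ve_i + \vx_i^H$, and (b) one aggregated $\Reveal$ share per revealed round of the form $-\vA_i\vs_i^H + T\vf_i$. Corollary~\ref{cor:linear-leakage} -- our multi-hint variant of $\HintMLWE$ from Definition~\ref{def:HintMLWE} -- then implies that the joint distribution of these $r$ ciphertexts and $\ell$ decryption shares is computationally indistinguishable from one in which the ciphertexts are replaced by uniformly random vectors and the decryption shares are chosen to force each revealed linear combination to reconstruct exactly $\langle \vlambda^i, \vx^H\rangle$, which (combined with the known corrupt contributions) equals the value $v_i$ delivered by $\ideal$. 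The simulated aggregate is then split back into per-client messages via uniform additive secret sharing among the honest clients, following the technique of Bell et al.~\cite{USENIX:BGLLMY23}; this last step is distributionally identical and so does not introduce additional loss.

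The main obstacle will be applying Corollary~\ref{cor:linear-leakage} uniformly across all $r$ rounds and all $\ell$ linear functions $\vlambda^i$ simultaneously while keeping the noise budget under control. Because the program $P$ is public and fixed in advance and the entries of each $\vlambda^i$ are bounded by the protocol specification, the required hint distributions are determined at setup time; nonetheless one must carefully track how the $T\ve_i$ terms accumulate through the linear combinations and show that the smudging noise contributed by the honest clients dominates the composed noise in the appropriate sense so that $\Sim$ can sample matching hint noise without knowledge of the honest $\vx_i^H$. Modulo this bookkeeping, the hybrid chain closes and indistinguishability of the real and ideal executions follows by standard composition.
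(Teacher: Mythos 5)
Your overall plan matches the paper's: simulate client-to-client key-reshare traffic as uniform (conditioned on connectivity among honest parties), collapse the server's view of honest messages to a single aggregate per round, invoke the multi-hint RLWE statement to replace that aggregate with something the simulator can produce from the $\Reveal$ outputs alone, and close with a hybrid argument. The paper's proof follows exactly this template. There are, however, two points where your sketch is either wrong or underspecified.

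The real issue is the claim that splitting the simulated aggregate back into per-client messages "is distributionally identical and so does not introduce additional loss." That is not true. In the real protocol an honest client's $\Store$ message is $\vA_i\vs^i_j + T\ve^i_j + \vx^i_j$ where $\vs^i_j$ is a fresh additive share of the persistent key. Conditioned on the aggregate, the shares $\vs^i_j$ (for all but one honest $j$) are independent and uniform over $R_q^d$, but the map $\vs\mapsto\vA_i\vs$ is not surjective onto $R_q^m$ when $d<m$, so $\vA_i\vs^i_j$ is \emph{not} uniform and the per-client message is not statistically close to a fresh uniform additive share of the aggregate. Each per-client message also carries its own individually-decryptable hint-noise sample $\vf^i_{k,j}$ in the $\Reveal$ rounds, which must also be accounted for. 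The paper handles this with dedicated hybrids (Hybrids~1 and~3 of Appendix~\ref{sec:no_dropout-proof}), embedding one MLWE challenge per honest client per round to replace each individual message (other than the last) by uniform. This is a computational step paid for with $\Theta(\sum_i t_i)$ MLWE reductions; it is not free, and asserting it is distributionally tight is a genuine gap in your argument.

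A smaller point: Corollary~\ref{cor:linear-leakage} as written does not mention the plaintexts $\vx_i$, yet your simulator needs to reason about ciphertexts of the form $\vA_i\vs+T\ve_i+\vx_i$ and revealed sums that equal $\langle\vlambda^k,\vx\rangle$ plus noise. What is actually used is the strengthened Lemma~\ref{lem:key_no_dropout}, which carries the plaintext through both the stored ciphertexts and the revealed aggregates. This strengthening does follow from the Corollary by a change of variables (shift each $\vb_i$ by $\vx_i$ and absorb the shift into $\vu_i$), so your appeal to the Corollary is repairable, but you should make the plaintext-tracking explicit rather than treat it as bookkeeping to be done later. The paper also introduces a ``lazy'' reformulation of the server's state so that every $\Reveal$ is expressed as a single linear functional $\bar{\vlambda}^i$ of the raw stored values; you will want something equivalent before the coefficient bounds and noise accounting you mention at the end can be carried out cleanly.
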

Full proof of this theorem can be found in Appendix~\ref{sec:no_dropout-proof}.

\begin{remark}\label{remark:communication}
    We note that the communication complexity between clients, the biggest bottleneck, can be improved beyond the naive implementation specified above. Recall that client-to-client communication is comprised exclusively of additive secret sharing of the client's secret. It can be advantageous to choose parameters in the encryption scheme so that the secret key is quite large. However, the communication complexity between parties will then grow accordingly as the size of each share is exactly the size of the secret key, in the naive implementation described above.

    To reduce communication costs, client $i$ holding a secret key share $\vs_i$ can, instead of additively sharing $\vs_i$ directly, sample a sequence of independent random seeds $r_1,\ldots,r_d$ that expand (using a PRG) to pseudorandom strings $\vy_1,\ldots,\vy_d$. By setting $\vy^* = \vs_i - \sum_{j=1}^d \vy_j$, the client can then send the seeds $\vr_1,\ldots,\vr_d$ to clients in the next cohort and $\vy^*$ to the server. The future client $j$ receiving a batch of $\vr_k$'s will then expand them to $\vy_k$'s and set their secret key $\vs_j = \sum \vy_k$. The server can use $\vy^*$ to ``correct'' the output of reveal by subtracting off $\vA\vy^*$ from the result, for the appropriately computed $\vA$.

    This results in client-to-client communication that is just $d\kappa$ bits, where $\kappa$ is the security parameter at the cost of sending an additional $|\vs|$ bits to the powerful and persistent server.
\end{remark}
\begin{figure}
\begin{tcolorbox}[enhanced, title=Server $\server$]
\begin{enumerate}
\item[] \hspace{-0.5cm} \textbf{Public parameters:}
\begin{itemize}
\item Clients' public keys (via PKI)
\item Vector length $\ell$, input domain $\mathbb{F}^\ell$.
\item Key sharing parameter $d$
\item Additive secret sharing scheme $\mathsf{AShare}$, threshold secret sharing scheme $\mathsf{TShare}$.
\item The $i$th instruction, $t_i=(\Mode_i, I_{i}, \{\vlambda^i_{k}\}_{k<i})$, from the program $P$ to be executed
\item A means of generating public random matrices $A_k$ indexed by corresponding round $k$
\item Discrete Gaussian distribution $D_\sigma$ for generating noise
\end{itemize}
\item[] \hspace{-0.5cm} For $i > 1$:
\item Send to $C_{i}$:
\begin{itemize}
    \item Encrypted shares of the key from $C_{i-1}$
\end{itemize}
\item Receive from each client of $C_{i}$:
\begin{itemize}
    \item Encrypted key resharings for $C_{i+1}$
    \item Receive $w_j^i$ from client $j$
\end{itemize}
\item Computation and Output:
\begin{itemize}
    \item Aggregate $w^i = \sum_{j\in [n]} w^i_j$
    \item If $\Mode_{i-1}=\Reveal$, compute and output the remainder $(w^{i-1}+\sum_{k<i-1} \vlambda^{i-1}_k w^k)$ modulo $T$
\end{itemize}
\end{enumerate}
\end{tcolorbox}
\caption{Server: No dropout resilience} \label{fig:server_no_dropout}
\end{figure}

\begin{figure}
\begin{tcolorbox}[enhanced, title=Client $C_{i,j}$]
\begin{enumerate}
\item[] \hspace{-0.5cm} \textbf{Public parameters:} 
\begin{itemize}
\item Clients' public keys (via PKI)
\item Vector length $\ell$, input domain $\mathbb{F}^\ell$.
\item Key sharing parameter $d$
\item Additive secret sharing scheme $\mathsf{AShare}$, threshold secret sharing scheme $\mathsf{TShare}$.
\item The $i$th instruction, $t_i=(\Mode_i, I_{i}, \{\vlambda^i_{k}\}_{k<i})$, from the program $P$ to be executed
\item A means of generating public random matrices $A_k$ indexed by corresponding round $k$
\item Discrete Gaussian distribution $D_\sigma$ for generating noise.
\end{itemize} 
\item[] \hspace{-0.5cm} \textbf{Private input:} Possibly an input vector $x_{i,j}:=I_{i}(D_{i,j})$ based on private data $D_{i,j}$.
\item Receive messages from $\server$:
\begin{itemize}
    \item Unless $i=1$, receive encrypted shares of key from $C_{i-1}$
    \end{itemize}
\item Local computation:
\begin{itemize}
    \item If $i=1$, sample uniformly random $s_{1,j}$ from $\mathbb{F}^{\ell}$.
    \item If $i>1$, decrypt shares and compute $s_{i,j} \leftarrow \sum_{r}s_{i-1,j}^r$
    \item Compute $(s^{1}_{i,j},\dots,s^{d}_{i,j})\leftarrow \mathsf{AShare}(s_{i,j})$
    \item If $\Mode_i=\Store$, generate and set $M_i = A_i$. Sample $g^i \gets D_\sigma$.%
    \item If $\Mode_i=\Reveal$, set $M_i = -\sum_{k<i}\lambda^i_{k} A_k$. Sample $g^i \gets\sum_{k<i}\lambda^i_kD_\sigma$.
    \item Compute $w^i_j = M_is_{i,j} + Tg^i$.%
\end{itemize}
\item Send to the server:
\begin{itemize}
    \item Send encrypted $(s^{1}_{i,j},\dots,s^{d}_{i,j})$ to $d$ random clients in $C_{i+1}$ via server
    \item Upload $w^i_j$ to server
\end{itemize}
\end{enumerate}
\end{tcolorbox}
\caption{Client $j$ in cohort $i$: No dropout resilience} \label{fig:client_no_dropout}
\end{figure}

\begin{figure}
\begin{tcolorbox}[enhanced, title=Server $\server$]
\begin{enumerate}
\item[] \hspace{-0.5cm} \textbf{Public parameters:}
\begin{itemize}
\item Clients' public keys (via PKI).
\item Input domain $\mathbb{F}^l$.
\item Key sharing parameter $d$.
\item Additive secret sharing scheme $\mathsf{AShare}$, threshold secret sharing scheme $\mathsf{TShare}$.
\item The $i$th instruction, $t_i=(\Mode_i, I_{i}, \{\lambda_{i,r}\}_{r<i})$, from the program $P$ to be executed
\item A list of dropped clients $\cD_i$ indexed by the $i$th cohort
\item A means of generating public random matrices $A_i$ indexed by corresponding round $i$
\item A pseudorandom generator $\mathsf{PRG}: \bbZ_q \rightarrow R_q^{\ell}$.
\item A sate $Z$ (initialized to 0) that aggregates missing secret shares.
\item Discrete Gaussian distribution $D_\sigma$ for generating noise.
\end{itemize}
\item[] \hspace{-0.5cm} For $i = 2$ to $N$:
\item Send to $C_{i}$:
\begin{itemize}
    \item Encrypted additive shares of key from $C_{i-1}$
    \item {\bf [Dropout Recovery]} Encrypted threshold shares from $C_{i-2}$ that recover the dropped clients in $C_{i-1}$
    \item {\bf [Remove Mask]} Encrypted threshold shares from non-dropped clients of $C_{i-1}$
\end{itemize}
\item Receive from each client in $C_{i}$:
\begin{itemize}
    \item Register dropped clients from $C_{i}$ in the global list $\cD_{i}$. 
    \item Encrypted key resharings for $C_{i+1}$
    \item Receive $w_j^i$ from client $j$ for every $j \in [n]\textbackslash \cD_i$
    \item Encrypted threshold shares of each additive key share to $C_{i+2}$
    \item Encrypted threshold shares of self-mask secret to $C_{i+1}$
    \item The decryption of all received threshold that server requested
\end{itemize}
\item Computation and Output:
\begin{itemize}
    \item Aggregate $\bar{w}^i = \sum_{j\in [n]\textbackslash\cD_i} w^i_j$
    \item Reconstruct missing key shares in $C_{i-1}$
\begin{itemize}
    \item Recover shares $s'_{k,j}$ that client $k$ sent to (dropped) client $j$ for all $k\in [n]\setminus \cD^{i-2}, j\in \cD^{i-1}$
    \item Update $Z \leftarrow Z + \sum_{k,j} s'_{k,j}$
\end{itemize}
\item Reconstruct self-mask of $C_{i-1}$
\begin{itemize}
    \item Recover $b_{i-1,j}$ for all non-dropped client $j$ in the $(i-1)$th cohort
    \item Compute $\mathsf{MASK}^{i-1}_j \leftarrow \mathsf{PRG}(b_{i-1,j})$
\end{itemize}
    \item If $\Mode_{i-1}=\Store$, set $w^{i-1} \leftarrow \bar{w}^{i-1} + \vA_{i-1}Z- \sum_{j \in [n] \textbackslash \cD_{i-1}} \mathsf{MASK}_{j}^{i-1}$
    \item If $\Mode_{i-1}=\Reveal$, compute and output the remainder $(\bar{w}^{i-1}+\sum_{k<i-1} \lambda_{i-1,k} (w^k-\vA_k Z)- \sum_{j \in [n] \textbackslash \cD_{i-1}} \mathsf{MASK}_{j}^{i-1})$ modulo $T$
\end{itemize}
\end{enumerate}
\end{tcolorbox}
\caption{Server: With dropout resilience} \label{fig:server_with_dropout}
\end{figure}

\begin{figure}[t]
\begin{tcolorbox}[enhanced, title=Client $C_{i,j}$]
\begin{enumerate}
\item[] \hspace{-0.5cm} \textbf{Public parameters:} 
\begin{itemize}
\item Clients' public keys (via PKI)
\item Input domain $\mathbb{F}^l$
\item Key sharing parameter $d$
\item Additive secret sharing scheme $\mathsf{AShare}$, threshold secret sharing scheme $\mathsf{TShare}$.
\item Chaperone parameter $h$
\item The $i$th instruction, $t_i=(\Mode_i, I_{i}, \{\lambda_{i,r}\}_{r<i})$, from the program $P$ to be executed
\item A list of dropped clients $\cD_i$ indexed by the $i$th cohort
\item A means of generating public random matrices $A_i$ indexed by corresponding round $i$
\item A pseudorandom generator $\mathsf{PRG}: \bbZ_q \rightarrow R_q^{m}$.
\item A sate $Z$ (initialized to 0) that aggregates missing secret shares.
\item Discrete Gaussian distribution $D_\sigma$ for generating noise.
\end{itemize}
\item[] \hspace{-0.5cm} \textbf{Private input:} Possibly an input vector $\vx_{i,j}:=I_{i}(D_{i,j})$ based on private data $D_{i,j}$.
\item Receive messages from $\server$:
\begin{itemize}
    \item Unless $i=1$, receive additive encrypted shares of key from $C_{i-1}$
    \item {\bf [Dropout Recovery]} Encrypted threshold shares from $C_{i-2}$ that recover the dropped clients in $C_{i-1}$
    \item {\bf [Remove Mask]} Encrypted threshold shares from non-dropped clients of $C_{i-1}$
    \end{itemize}
\item Local computation:
\begin{itemize}
    \item If $i=1$, sample uniformly random $s_{1,j}$ from $\mathbb{Z}_q^*$.
    \item If $i>1$, decrypt shares and compute $s_{i,j} \leftarrow \sum_{r}s_{i-1,j}^r$
    \item Compute $(s^{1}_{i,j},\dots,s^{d}_{i,j})\leftarrow \mathsf{AShare}(s_{i,j})$
    \item For each $s^{k}_{i,j}$, $k \in [d]$, compute $(t^{k,1}_{i,j},\dots,t^{k,h}_{i,j}) \leftarrow \mathsf{TShare}(s^{k}_{i,j})$
    \item Sample uniformly random secret $b_{i,j}$ from $\bbZ_q$ and compute $\mathsf{MASK}^{i}_j \leftarrow \mathsf{PRG}(b_{i,j})$
    \item Compute $(u^{1}_{i,j},\dots,u^{h}_{i,j}) \leftarrow \mathsf{TShare}(b_{i,j})$
    \item If $\Mode_i=\Store$, generate and set $M_i = A_i$. Sample $g^i \gets D_\sigma$.
    \item If $\Mode_i=\Reveal$, set $M_i = -\sum_{k<i}\lambda^i_{k} A_k$. Sample $g^i \gets\sum_{k<i}\lambda^i_kD_\sigma$.
    \item Compute $w^i_j = M_is_{i,j} + Tg^i + \mathsf{MASK}_j^i$.
\end{itemize}
\item Send to the server:
\begin{itemize}
    \item Send encrypted $(s^{1}_{i,j},\dots,s^{d}_{i,j})$ to $d$ random clients in $C_{i+1}$ via server
    \item For each $k\in [d]$, send encrypted $(t^{k,1}_{i,j},\dots,t^{k,h}_{i,j})$ to $h$ random chaperones of client $k$ in $C^{i+2}$ via server.
    \item Send encrypted $(u^{1}_{i,j},\dots,u^{h}_{i,j})$ to $h$ random chaperones of client $j$ in $C^{i+1}$ via server.
    \item The decryption of all received threshold that server requested.
    \item Upload $w^i_j$ to server
\end{itemize}
\end{enumerate}
\end{tcolorbox}
\caption{Client $j$ in cohort $i$: With dropout resilience} \label{fig:client_with_dropout}
\end{figure}

\subsection{Adding Dropout Resilience}
\label{sec:dropout}
We now describe how to augment the simple protocol present above to be resilient to client dropouts.

Recall that security and correctness of the protocol above effectively reduces to maintaining the invariant that at any point in time, the persistent secret key $\vs$ is safely additively secret shared across the current cohort of clients: clients are holding $\bar{\vs}_1,\ldots,\bar{\vs}_n$ that are uniformly random conditioned on $\vs = \sum_{i=1}^n \bar{\vs}_i$.

This is maintained by having each client re-share their share to some random subset of the next cohort: client $i$ holding $\bar{\vs}_i$ samples uniformly random $\bar{\vs}_i^1,\ldots,\bar{\vs}_i^d$ subject to $\sum_{j=1}^d \bar{\vs}_i^j=\bar{\vs}_i$ and sends each share $\bar{\vs}_i^j$ to a random client in the next cohort. Client $i$'s secret share $\bar{\vs}_i$ was in turn the result of summing the shares sent to it by clients in the previous cohort subject to $\bar{\vs}_i = \sum_{j=1}^{d'} \bar{\vs}'_j$.

We will augment this maintenance procedure with a simple mechanism to ensure that if any client drops out, their share can be recovered to maintain the persistent global secret key. Some delicacy is required to avoid compromising clients that send \emph{any} sensitive information (even if they are not able to complete a full protocol round).

Our approach is to associate with each client a random committee of \emph{chaperones} in some future cohort. The chaperones for client $i$ will effectively hold threshold secret shares of client $i$'s secret key so that if client $i$ drops out, a quorum of the chaperones can help the server to reconstruct $\bar{\vs}_i$. The server can then use $\bar{\vs}_i$ to ensure $\Reveal$ outputs are computed correctly (by subtracting off from the result $\vA\bar{\vs}_i$ for the appropriate matrix $\vA$). 

To enable the chaperones to do this, every client $j$ who sends a share $\vs'_j$ to client $i$ (so that client $i$'s share $\bar{\vs}_i = \sum_{j=1}^{d'} \vs'_j$) will additionally send \emph{threshold secret shares} of $\vs'_j$ to a randomly chosen set of chaperones in the cohort after $i$'s.

If client $i$ fails to send \emph{any} message specified by the protocol (recall that we assume all traffic is visible, but not its contents), then these chaperones will release their shares of $\vs'_j$ enabling the server to recover $\vs'_j$. Because this happens for all shares $\vs'_j$ sent to client $i$, the server can reconstruct $\bar{\vs}_i$. Provided that the chaperone committees do not have too many corrupted members, client $i$'s share $\bar{\vs}_i$ will remain perfectly secret.

The problem with what we have sketched so far, is that client $i$ may have sent an encryption of the input $\vx_i$ under $\bar{\vs}_i$ before dropping out. Therefore, if the procedure outlined above is followed, $\bar{\vs}_i$ can be used to decrypt client $i$'s private input $\vx_i$!

To avoid this, we introduce one last simple mechanism: client $i$ masks their encryption with a pseudorandom mask $\vy_i$. The short seed for this pseudorandom mask is then threshold secret shared with a random committee of chaperones in the next cohort. If the client successfully sends all messages, i.e.~does \emph{not} drop out, then the chaperones release their shares so the server can reconstruct the mask. If the client fails to send all messages as protocol defined, i.e.~the client is registered as having dropped out, the chaperones will not release the shares of the mask. Thus, provided there are not too many corrupt chaperones, the mask will remain pseudorandom and ensure the privacy of the client's input $\vx_i$.

\begin{remark}
    An alternative approach is to associate a single publicly known committee of chaperones with each client that will be used for all tasks above (instead of choosing a random committee for each underlying message in the dropout-free protocol). This committee could even be the same for all clients if one has good reason to believe that not too many will be corrupted. This may enable simpler implementation, albeit at the cost of making any one such committee easier to corrupt (as the membership is known in advance).

    Additionally, instead of automatically relying on the chaperones to release the mask for a client's encryption of their input, $\vx_i$. An alternate approach is to ask the client to send that directly in the next round, after checking all their messages were delivered successfully. Only if the client does not stay online would the chaperones need to reconstruct the mask. This way, clients who do not drop out will reduce the overall communication cost.
\end{remark}

\section{Benchmarking}\label{sec:benchmark}
In this section we discuss concrete parameter selection and performance costs for our protocol.  Given that RLWE encryption and decryption 
very fast in practice using FFT friendly parameters we focus on ciphertext expansion, and overall communication costs for clients.

\subsection{Parameter Selection}
We use the lattice estimator~\cite{DBLP:journals/jmc/AlbrechtPS15} to estimate the hardness of RLWE problem used and set parameters to have at least 128 bits of computational security. The secret distribution is a Gaussian with standard deviation $\sigma_s = \sqrt{2}\sigma$, and the noise distribution used in fresh encryptions is Gaussian with standard deviation $\sigma_n = 2\sigma\sqrt{r+1}$, where $\sigma = 3.2$ is the stdev used to estimate RLWE security, and $r$ is the number of rounds/releases in the protocol. Since we only apply additive homomorphic operations on ciphertexts, we can track the $l_\infty$ norm on the coefficient embedding of the error polynomial, as well as the plaintext. 

 Security of the our scheme relies on a variant of
Kim et al.~\cite{C:KLSS22} %
assumption  which proves that by using a uniformly random secret and Gaussian error with standard deviation $\sigma_n$ as set above, the resulting aggregation protocol is as secure as standalone HE scheme with $\sigma = 3.2$.
Note that the specific choice of $\sigma = 3.2$ as the error distribution for standalone homomorphic encryption schemes is suggested by the Homomorphic Encryption Standard~\cite{HomomorphicEncryptionSecurityStandard}, and is widely accepted and used in practice.

\subsection{Communication costs}
Table~\ref{tab:params} shows some parameters for common settings of the protocol. We find parameters by minimizing the dominant communication costs (bits sent to server) while doing a grid search over secure parameters. It can be observed that as soon as vector length is $> 1000$
ciphertext packing pays off, and ciphextext expansion stays within small single digits (2-5x). By ciphertext packing we mean encoding several 
entries of the vectors to be encrypted in the same coefficient of a ciphertext.
This gives flexibility when finding parameters as it allows to use the plaintext domain optimally. This explain the large values of $N$ and $q$ in the table.
An important optimization well-known in practice is to drop unused coefficients of a ciphertext.

We now turn our attention to the prefix sum application. In Figure~\ref{fig:client_comm_l}, we show how per-client communication scales with input vector length, compared to a baseline insecure protocol where plaintexts are submitted in the clear. As shown in the plot, the communication overhead is small even for small $\epsilon$ (privacy parameter). The plaintext domain has to account for
differentially private noise. As the corresponding distribution is a centered
Gaussian, this is not a huge increase over the requirement that the plaintext domain has to fit the sum of $n$ clients' contributions, i.e. $n$, as every input is a binary vector in this case.

We do not report the client-to-client communication cost as this is (a) comparatively much smaller and (b) invariant regardless of how the encryption scheme is instantiated (see Remark~\ref{remark:communication}).

\begin{figure}\label{fig:comm-plot}
    \centering
    \includegraphics[width=0.5\linewidth]{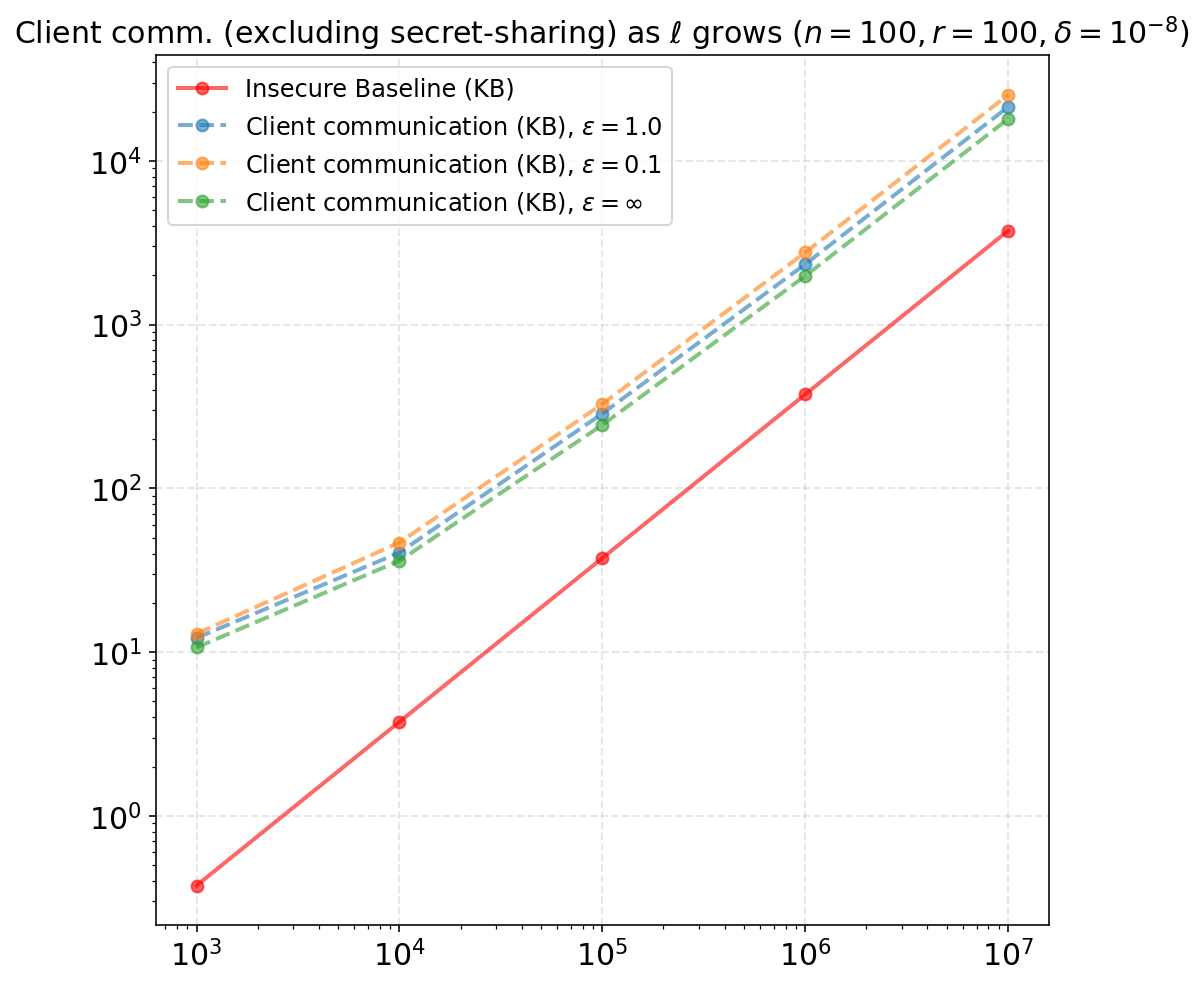}
    \caption{Client communication to server for several vector lengths for our protocol ($1000$ rounds), compared to the baseline where clients submit a vector in $\{0,1\}^\ell$ in the clear. }
    \label{fig:client_comm_l}
\end{figure}

\begin{table}\label{fig:comm-table}
    \centering
\begin{tabular}{ccccccc}
\toprule
 $n$    & $\ell$   &   $N$ &   $q$ &   packing factor & client communication   & ciphertext expansion   \\
 \midrule
 $10^3$ & $10^3$   &  2048 &    44 &                1 & 16.76 KB               & 8.38x                  \\
 $10^5$ & $10^3$   &  2048 &    54 &                1 & 20.57 KB               & 10.29x                 \\
 $10^7$ & $10^3$   &  4096 &    64 &                1 & 40.77 KB               & 20.38x                 \\
 $10^3$ & $10^5$   &  4096 &    96 &                3 & 449.16 KB              & 2.25x                  \\
 $10^5$ & $10^5$   &  4096 &    87 &                2 & 588.29 KB              & 2.94x                  \\
 $10^7$ & $10^5$   &  4096 &   103 &                2 & 696.49 KB              & 3.48x                  \\
 $10^3$ & $10^7$   & 16384 &   434 &               16 & 34.88 MB               & 1.74x                  \\
 $10^5$ & $10^7$   & 16384 &   413 &               12 & 43.87 MB               & 2.19x                  \\
 $10^7$ & $10^7$   & 16384 &   417 &               10 & 52.98 MB               & 2.65x                  \\
\bottomrule
\end{tabular}
    \label{tab:params}
    \caption{Parameters for some common settings (number of rounds is 1000 in all cases, and input domain is $[2^{16}]^\ell$).}
    
\end{table}

\bibliographystyle{ACM-Reference-Format}
\bibliography{main}

\newpage
\appendix

\section{Preliminaries}

\subsection{Cryptographic Building Blocks}

\subsubsection{Lattices, Rings, and RLWE Encryption}
\begin{definition}[Decisional LWE assumption]
    Given a prime $q$, a matrix $A$ sampled uniformly from $\mathbb{Z}_q^{M\times N}$, a vector $s$ uniformly sampled from $\mathbb{Z}_q^{N}$, an error vector $e \in \mathbb{Z}_q^{N}$ sampled from $\phi$. We say that Decisional LWE is hard if $(A, As+e)$ is computationally indistinguishable from the uniform.
\end{definition}

\begin{definition}[LWE encryption]
    Let $A$ be sampled uniformly from $\mathbb{Z}_q^{M\times N}$ and $\phi$ be an error distribution in which LWE is hard. LWE encryption scheme consists of the following three algorithms:
    \begin{itemize}
        \item $\mathsf{LWEGen}(1^\lambda) \rightarrow (s)$: sample a random vector $s$ from $\mathbb{Z}^{N}_q$, sample $e \leftarrow \phi$.
        \item $\mathsf{LWEEnc}(A, s, x) \rightarrow c$: compute $c = As + Te + x \mod q$, where $x \in \mathbb{Z}_T^M$, $T$ is coprime to $q$.
        \item $\mathsf{LWEDec}(A, s, c) \rightarrow x$: compute $x = (c - As) \mod T$.
    \end{itemize}
    We say that LWE encryption is secure is secure if it has CPA security. Note that if Decisional LWE assumption holds, then LWE encryption is secure.
\end{definition}

\subsubsection{Secret Sharing}

\begin{definition}[Additive secret sharing]
An additive secret-sharing scheme consists of the following algorithm:
\begin{itemize}
    \item $\mathsf{AShare}(sk) \rightarrow (x_{1},\dots,x_{n})$: take in a secret $sk$, generate randomness $\rho$, output $n$ random messages such that $\sum_{i=1}^{n} x_i = sk$. 
\end{itemize}
\end{definition}

\begin{definition}[Threshold secret sharing]
A threshold secret-sharing scheme consists of the following algorithm:
\begin{itemize}
    \item $\mathsf{TShare}(sk) \rightarrow (x_{1},\dots,x_{m})$: take in a secret $sk$, generate randomness $\rho$, output $m$ random messages to help with the reconstruction of $sk$.
    \item $\mathsf{TRec}(\{x_{j}\}_{j\in S, |S| \geq t}) \rightarrow sk$: take in at least $t$ threshold secret shares, output $sk$.
\end{itemize}
\end{definition}
We say that the threshold secret sharing scheme is secure if the probability that, given less than $t$ threshold shares, an adversary can recover $sk$ is negligible.

\section{Fully-Synchronous (no dropouts) semi-honest security}\label{sec:no_dropout-proof}

We use a variant case of the Hint-MLWE problem of Kim et al.~\cite{C:KLSS22}. Kim et al.~additionally allow leakage on the MLWE secret, $s$, but we do allow the adversary this. On the other hand, we do assume the secret key is uniformly random (in contrast to Gaussian). Additionally, our variant considers multiplying the noise by a factor of $T$ (where $T$ is coprime to the modulus $q$).

\begin{definition}[MLWE~\cite{C:KLSS22}]
\label{def:MLWE}
Let $d,m,q$ be positive integers, and $\chi$ be a distribution over $R^{d+m}$. Then, the goal of the Module-LWE (MLWE) problem is to distinguish $(\vA, \vu)$ from $(\vA, \vA\vs+\ve)$ for $\vA \ugets R_{q}^{m \times d}, \vu \leftarrow \mathcal{U}(R_{q}^{m})$, $\ve \leftarrow \chi$, and $\vs \ugets R_q^d$. We say that a PPT adversary $\mathcal{A}$ has advantage $\varepsilon$ in solving $\MLWE_{R,d,m,q,\chi}$ if \[\Pr[\cA(\vA,\vA\vs+\ve)=1] - \Pr[\cA(\vA,\vu)=1] \geq \varepsilon.\]
    
\end{definition}

\begin{definition}[Hint-MLWE~\cite{C:KLSS22}]
\label{def:HintMLWE}
    Let $d,m,\ell$ be positive integers, $\chi,\xi$ be distributions over $R^{m}$, $\chi$ a distribution over $R^d$. The Hint-MLWE problem, denoted by $\HintMLWE^{\ell,\xi}_{R,d,m,q,T,\chi,\chi'}$ asks an adversary $\cA$ to distinguish the following two cases:
    \begin{enumerate}
        \item $(\vA,\vA\vs+T\ve, \ve+\vf_1,\ldots,\ve+\vf_\ell)$,
        \item $(\vA,\vu,\ve+\vf_1,\ldots,\ve+\vf_\ell)$;
    \end{enumerate}
    where in both distributions above $\vA\ugets R^{m\times d}_q$, $\vs\ugets \chi'$, $\ve\gets \chi$, $\vf_i\gets \xi$ for all $i\in[\ell]$, and $\vu\ugets R^m_q$.

    We take $\HintMLWE^{\ell,\sigma_2}_{R,d,m,q,T,\sigma_1}$ to be the case where $\xi$ is a spherical Gaussian distribution with width $\sigma_2$ and $\chi$ is a spherical Gaussian distributions with width $\sigma_1$, and $\chi'$ is uniformly random.
\end{definition}
We verify that Kim et al.'s proof of the hardness of Hint-MLWE can similarly be adapted to our variant following the observations of Bell et al.~for HintLWE~\cite{USENIX:BGLLMY23} and noticing that if the secret isn't leaked then conditional sampling need not be invoked on the secret key.

\begin{definition}[Smoothing parameter~\cite{1366257}]
For an $n$-dimensional lattice $\Lambda$ and positive real $\varepsilon>0$, the smoothing parameter $\eta_{\varepsilon}(\Lambda)$ is the smallest $s$ such that $\rho_{1/s}(\Lambda^{*}\textbackslash\{\mathbf{0}\})\leq\varepsilon$.
    
\end{definition}

\begin{theorem}[\cite{C:KLSS22}]
    Let $d,k,m,q,\ell$ be positive integers.
    For any $T$ coprime to $q$, $\ell>0$, and $\sigma_1,\sigma_2,\sigma>0$ such that $\frac{1}{\sigma^2}\ge 2\left(\frac{1}{\sigma_1^2}+\frac{\ell}{\sigma_2^2}\right)$. If $\sigma\ge \sqrt{2}\eta_\varepsilon(\bbZ^n)$ for $0<\varepsilon\le 1/2$, then there exists an efficient reduction from $\MLWE_{R,d,m,q,\sigma}$ to $\HintMLWE^{\ell,\sigma_2}_{R,d,m,q,T,\sigma_1}$.
\end{theorem}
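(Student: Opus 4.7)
The plan is to give an efficient sampling reduction following the noise-flooding strategy of Kim et al.~\cite{C:KLSS22}, specialized and simplified to our uniform-secret, $T$-scaled variant. The starting observation is that the target HintMLWE joint distribution $(\ve,\ve+\vf_1,\ldots,\ve+\vf_\ell)$ is a multivariate discrete Gaussian; the conditional distribution of $\ve$ given the hints $\vh_i := \ve+\vf_i$ is again Gaussian, with mean $\mu(\vh) := \sigma_*^2 \sum_i \vh_i/\sigma_2^2$ and width $\sigma_*$, where $\sigma_*^{-2} := \sigma_1^{-2}+\ell\sigma_2^{-2}$. The theorem's hypothesis $\sigma^{-2}\ge 2(\sigma_1^{-2}+\ell\sigma_2^{-2})$ rewrites as $\sigma_*^2\ge 2\sigma^2$, so the residual width $\sqrt{\sigma_*^2-\sigma^2}$ is at least $\sigma \ge \sqrt{2}\eta_\varepsilon(\bbZ^n)$. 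This is exactly the gap into which the MLWE noise will be flooded.

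Concretely, on input an MLWE challenge $(\vA,\vb)$, the reduction: (i) samples the hints $(\vh_1,\ldots,\vh_\ell)$ from their joint marginal, a zero-mean multivariate discrete Gaussian with covariance $(\sigma_1^2\mathbf{1}\mathbf{1}^{\top}+\sigma_2^2 I_\ell)\otimes I_n$; (ii) computes $\mu(\vh)$ from the closed-form formula above; (iii) draws an auxiliary discrete Gaussian $\valpha \gets D_{\mu,\sqrt{\sigma_*^2-\sigma^2}}$; (iv) outputs $(\vA,\,T\vb+T\valpha,\,\vh_1,\ldots,\vh_\ell)$. Because $\gcd(T,q)=1$, multiplication by $T$ preserves uniform distributions and commutes with modular arithmetic; it aligns the MLWE noise (additive and of width $\sigma$) with the HintMLWE convention of premultiplying the noise by $T$, with the implicit relabeling $\vs\mapsto T\vs$ (still uniform) being absorbed cosmetically. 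Correctness then reduces to Peikert's discrete Gaussian convolution lemma: since both $\sigma$ and $\sqrt{\sigma_*^2-\sigma^2}$ exceed $\sqrt{2}\eta_\varepsilon(\bbZ^n)$, the sum $\vz+\valpha$ (with $\vz$ the implicit MLWE noise) is within $O(\varepsilon)$ statistical distance of $D_{\mu,\sigma_*}$. Thus real MLWE samples map to real HintMLWE samples and uniform samples map to uniform samples, preserving any distinguishing advantage up to negligible loss.

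The main obstacle is carefully verifying that sampling the hints first from their marginal and then the noise from the flooded conditional really reproduces the target joint multivariate Gaussian up to negligible statistical distance. This decomposes into (a) the marginal of $\vh$ matching the prescribed multivariate discrete Gaussian, which follows by a direct eigenstructure computation on $\sigma_1^2\mathbf{1}\mathbf{1}^{\top}+\sigma_2^2 I_\ell$ (the all-ones and orthogonal-complement eigenspaces); (b) the conditional of $\ve$ given $\vh$ being statistically close to $D_{\mu(\vh),\sigma_*}$, which is the core content of the convolution lemma, applied after translating by the known center $\mu(\vh)$; and (c) observing that because our variant reveals no leakage on the secret $\vs$, we avoid Kim et al.'s additional step of conditionally resampling $\vs$ from a Gaussian posterior, following the analogous simplification of Bell et al.~\cite{USENIX:BGLLMY23} for HintLWE. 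Uniformity of $\vs$ is preserved trivially by the reduction. Combining these pieces and summing the statistical distances over the hybrid chain yields the claimed efficient reduction from $\MLWE_{R,d,m,q,\sigma}$ to $\HintMLWE^{\ell,\sigma_2}_{R,d,m,q,T,\sigma_1}$.
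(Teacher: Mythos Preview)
Your proposal is correct and aligns with the paper's approach: the paper does not give a self-contained proof of this theorem but simply states that Kim et al.'s proof adapts to this variant via Bell et al.'s $T$-scaling observation and by dropping the conditional resampling of $\vs$ (since no secret leakage is released). Your sketch fills in precisely these details---the Gaussian conditioning/convolution argument, the coprime-$T$ relabeling $\vs\mapsto T\vs$, and the omission of the secret-posterior step---so it is exactly the expanded version of what the paper asserts in one sentence.
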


From this we derive a sanity check for our security proof that says that the leakage of the reveal operations does not reveal anything about the encrypted state held by the server. The full required for the proof amounts to incorporating a hybrid argument with this observation. Note that here we give a general proof for the high dimensional case. Our protocol actually works under Hint-RLWE assumption, where $d$ is set to 1.

\zxnote{Here is the general proof. Our protocol actually is in the case $d=1$. I will leave this session unchanged but add some comments.}

\begin{corollary}\label{cor:linear-leakage}
    Assuming the hardness of $\HintMLWE^{\ell,\sigma_2}_{R,d,m,q,T,\sigma_1}$ \mnote{fix params}, for any $\vlambda^{(1)},\ldots,\vlambda^{(\ell)}\in \bbZ_q^m$, the following distributions are indistinguishable for any PPT adversary
    \begin{enumerate}
        \item $(\vA_1,\ldots,\vA_r, \vA_1\vs+T\ve_1,\ldots,\vA_r\vs+T\ve_r, \sum_{i=1}^r\vlambda^{(1)}_i(T\vf^{(1)}_i-\vA_i\vs),\sum_{i=1}^r\vlambda^{(\ell)}_i(T\vf^{(\ell)}_i-\vA_i\vs))$,
        \item $(\vA_1,\ldots,\vA_r, \vu_1,\ldots,\vu_r, \sum_{i=1}^r\vlambda^{(1)}_i(T\vf^{(1)}_i+T\ve_i-\vu_i),\sum_{i=1}^r\vlambda^{(\ell)}_i(T\vf^{(\ell)}_i-T\ve_i-\vu_i))$;
    \end{enumerate}
    where in both distributions above $\vA_i\ugets R^{m\times d}_q$, $\vs\gets D_{d,\sigma_1}$, $\ve\gets D_{m,\sigma_1}$, $\vf_i^{(j)}\gets D_{\sigma_2}$, and $\vu_i\ugets R^m_q$ for all $i\in[r], j\in [\ell]$.
\end{corollary}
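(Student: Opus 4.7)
The plan is to prove the corollary by a direct reduction from HintMLWE to the distributional indistinguishability claim, without needing any hybrid argument. The key observation is that the $r$ per-round ciphertexts and the $\ell$ linear leakages can all be simulated from a single HintMLWE challenge in a sufficiently large dimension.

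First, I would set up the reduction by viewing the $r$ rounds as a single MLWE instance of larger module dimension $rm$. Specifically, stack $\vA := (\vA_1;\ldots;\vA_r) \in R_q^{rm \times d}$ and $\ve := (\ve_1;\ldots;\ve_r) \in R_q^{rm}$; since each $\ve_i$ is i.i.d.\ spherical Gaussian with parameter $\sigma_1$, the concatenation $\ve$ is also spherical Gaussian in the larger space. Similarly, for each $j \in [\ell]$ define $\vf^{(j)} := (\vf^{(j)}_1;\ldots;\vf^{(j)}_r)$, an i.i.d.\ spherical Gaussian with parameter $\sigma_2$. A HintMLWE challenge then provides $(\vA, \vy, \vh_1, \ldots, \vh_\ell)$ where $\vy$ is either $\vA\vs + T\ve$ or uniform, and each hint is $\vh_j = \ve + \vf^{(j)}$.

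Next, I would transform this challenge into a sample from the corollary's distribution. Parse $\vy = (\vy_1;\ldots;\vy_r)$ and $\vh_j = (\vh_{j,1};\ldots;\vh_{j,r})$, then compute
\[
L_j := T \sum_{i=1}^r \vlambda^{(j)}_i \vh_{j,i} \;-\; \sum_{i=1}^r \vlambda^{(j)}_i \vy_i,
\]
and output $(\vA_1,\ldots,\vA_r,\vy_1,\ldots,\vy_r,L_1,\ldots,L_\ell)$. A direct substitution shows that in the \emph{real} HintMLWE case ($\vy_i = \vA_i\vs + T\ve_i$) the leakage collapses to $L_j = \sum_i \vlambda^{(j)}_i(T\vf^{(j)}_i - \vA_i\vs)$, reproducing distribution~(1); and in the \emph{uniform} case it becomes $L_j = \sum_i \vlambda^{(j)}_i(T\vf^{(j)}_i + T\ve_i - \vy_i)$, reproducing distribution~(2). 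Hence any PPT distinguisher between the two distributions of the corollary yields a HintMLWE distinguisher with the same advantage.

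The only subtlety to check is that the noise parameters line up: the stacked error $\ve$ and hint noises $\vf^{(j)}$ over the larger $rm$-dimensional module must still satisfy the hardness hypothesis. This amounts to observing that i.i.d.\ spherical Gaussians remain spherically Gaussian after concatenation, so the reduction lifts immediately to $\HintMLWE^{\ell,\sigma_2}_{R,d,rm,q,T,\sigma_1}$. I expect this parameter-matching to be the main---and only---nontrivial bookkeeping step, since no hybrid across rounds is needed: the linearity of the leakage lets one absorb all $r$ ciphertexts and $\ell$ reveal queries inside a single HintMLWE invocation.
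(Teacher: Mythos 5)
Your proposal is correct and takes essentially the same route as the paper: stacking the $r$ rounds into a single HintMLWE instance over an $rm$-dimensional module and applying the block-linear map $\sum_i \vlambda^{(j)}_i(T\vh_{j,i}-\vy_i)$ to the hints and challenge, which is exactly the reduction $F(\vA,\vb,\vy^{(1)},\ldots,\vy^{(\ell)}) \mapsto (\vA,\vb,M(\vlambda^{(1)})(T\vy^{(1)}-\vb),\ldots)$ in the paper. You are in fact slightly more careful than the paper on the bookkeeping point that the underlying assumption must be taken at dimension $rm$ rather than $m$ (a detail the paper itself flags internally as needing fixing).
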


\begin{proof}
    
This follows by observing that each distribution is a linear function from the corresponding HintMLWE distribution. Namely, if we take
\[ \vA := \left[\begin{array}{c}
     \vA_1\\\cdots\\\vA_n\end{array}\right]\quad\&\quad 
     \vu := \left[\begin{array}{c}
     \vu_1\\\cdots\\\vu_n\end{array}\right]
     \quad\&\quad\ve := \left[\begin{array}{c}
     \ve_1\\\cdots\\\ve_n\end{array}\right]\quad\&\quad \vf^{(i)} :=\left[\begin{array}{c}
     \vf^{(i)}_1\\\cdots\\\vf^{(i)}_n\end{array}\right] \]
for all $i \in [\ell]$ and define the operator
\[ M(\vlambda):= \left[ \vlambda_1 I |\cdots \vlambda_n I\right] = \vlambda^\top \otimes I\]
then we can equivalently formulate the above distributions as
\begin{enumerate}
    \item $\big(\vA,\vA\vs+T\ve,M(\lambda^{(1)})(T\vf^{(1)}-\vA\vs),\ldots,M(\lambda^{(\ell)})(T\vf^{(\ell)}-\vA\vs)\big)$,
    \item  $\big(\vA,\vu,M(\lambda^{(1)})(T\vf^{(1)}+T\ve-\vu),\ldots,M(\lambda^{(\ell)})(T\vf^{(\ell)}+T\ve-\vu)\big)$.
\end{enumerate}

The corollary then follows from the simple reduction $F$ that given $(\vA,\vb,\vy^{(1)},\ldots,\vy^{(\ell)})$ and $\vlambda^{(1)},\ldots,\
\vlambda^{(\ell)}$ simply outputs $$(\vA,\vb,M(\vlambda^{(1)})(T\vy^{(1)}-\vb),\ldots,M(\vlambda^{(\ell)})(T\vy^{(\ell)}-\vb)).$$
To see this suffices note that for $\vy^{(i)}=\ve+\vf^{(i)}$, \[M(\vlambda^{(i)})(\vy^{(i)}-\vb) = \left\{\begin{array}{ll}
    M(\vlambda^{(i)})(T\vf^{(i)}-\vA\vs) &\mbox{if } \vb = \vA\vs+T\ve,  \\
     M(\vlambda^{(i)})(T\vf^{(i)}+T\ve-\vu) & \mbox{if }\vb = \vu.
\end{array}\right.\]
\end{proof}

However, as mentioned, this corollary alone is not quite enough for our proof. We need to show that not only does the encrypted state remain secure under the $\Reveal$ operations but moreover that the $\Store$ operations themselves remain secure given these releases: encrypting under correlated keys remain secure under leakage.

\begin{lemma}\label{lem:key_no_dropout}
    Assuming the hardness of $\HintMLWE^{\ell,\sigma_2}_{R,d,m,q,T,\sigma_1}$, for any $\vlambda^{1},\ldots,\vlambda^{\ell} \in \bbZ_q^m, \vx_1,\ldots,\vx_t \in R_q^m$ the following distributions are indistinguishable for any PPT adversary
    \begin{enumerate}
    \item $\big(\vA_1,\ldots, \vA_r,\vA_1\vs+T\ve_1+\vx_1,\ldots,\vA_r\vs+T\ve_r+\vx_r,\sum_{i=1}^r\vlambda_i^{1}(T\vf_i^{1}-\vA_i\vs-\vx_i)+y_1,\ldots,\sum_{i=1}^r\vlambda_i^{\ell}(T\vf_i^{\ell}-\vA_i\vs-\vx_i)+y_{\ell}\big)$,
    \item  $\big(\vA_1,\ldots, \vA_r,\vu_1,\ldots,\vu_r,\sum_{i=1}^r\vlambda_i^{1}(T\vf_i^{1}+T\ve_i-\vu_i)+y_1,\ldots,\sum_{i=1}^r\vlambda_i^{\ell}(T\vf_i^{\ell}+T\ve_i-\vu_i)+y_{\ell}\big)$;
\end{enumerate}
  where in both distributions above $\vA_i\ugets R^{m\times d}_q$, $\vs\gets D_{d,\sigma_1}$, $\ve_i\gets D_{m,\sigma_1}$, $\vf^{(j)}_i\gets D_{\sigma_2}$ for all $i\in[r], j\in [\ell]$, and $\vu_1,\ldots,\vu_r\ugets R^m_q$ and we define $y_k := \sum_{i = 1}^r \vlambda_i^{k}\vx_i$ for $k\in[\ell]$.
\end{lemma}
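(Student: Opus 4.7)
The plan is a short direct reduction to Corollary~\ref{cor:linear-leakage}. First, I would simplify the ``leaked'' components in Distribution~1 of the lemma. Expanding $y_k = \sum_{i=1}^r \vlambda^k_i \vx_i$ gives
\[
\sum_{i=1}^r \vlambda^k_i (T\vf^k_i - \vA_i\vs - \vx_i) + y_k \;=\; \sum_{i=1}^r \vlambda^k_i (T\vf^k_i - \vA_i\vs),
\]
so the third block of coordinates in Distribution~1 is literally the third block of the first distribution in Corollary~\ref{cor:linear-leakage}. Thus the only difference between the lemma's Distribution~1 and the corollary's Distribution~1 is that the ciphertexts $\vA_i\vs+T\ve_i$ have been shifted by the fixed plaintexts $\vx_i$.

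Based on this, I would define the reduction $F$ that, given a challenge $(\vA_1,\ldots,\vA_r,\vb_1,\ldots,\vb_r,\vc_1,\ldots,\vc_\ell)$ from Corollary~\ref{cor:linear-leakage}, outputs
\[
\bigl(\vA_1,\ldots,\vA_r,\;\vb_1+\vx_1,\ldots,\vb_r+\vx_r,\;\vc_1,\ldots,\vc_\ell\bigr).
\]
Plugging in the corollary's Distribution~1, where $\vb_i=\vA_i\vs+T\ve_i$ and $\vc_k=\sum_i \vlambda^k_i(T\vf^k_i - \vA_i\vs)$, yields exactly the lemma's Distribution~1 (after the cancellation above).

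For the other case, I would apply the same $F$ to the corollary's Distribution~2, where $\vb_i=\vu_i$ is uniform and $\vc_k=\sum_i \vlambda^k_i(T\vf^k_i + T\ve_i - \vu_i)$. Setting $\vu'_i := \vu_i + \vx_i$ (which remains uniformly distributed over $R^m_q$ since $\vu_i$ is), a direct substitution gives
\[
\vc_k \;=\; \sum_{i=1}^r \vlambda^k_i\bigl(T\vf^k_i + T\ve_i - \vu'_i\bigr) + \sum_{i=1}^r \vlambda^k_i \vx_i \;=\; \sum_{i=1}^r \vlambda^k_i\bigl(T\vf^k_i + T\ve_i - \vu'_i\bigr) + y_k,
\]
so the image of $F$ on Distribution~2 is identically distributed to the lemma's Distribution~2 after renaming $\vu_i \leftarrow \vu'_i$.

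Because $F$ is an efficient, deterministic post-processing of the HintMLWE challenge, any PPT distinguisher for the two distributions in the lemma yields one for the corollary, contradicting its hardness. I do not anticipate any real obstacle here: the only subtlety is noticing that the $y_k$ offset in Distribution~1 exactly cancels the plaintext shift pulled out of each summand, and that translation by a fixed $\vx_i$ preserves the uniform distribution in Distribution~2, which allows a one-shot (non-hybrid) reduction rather than a hybrid over the coordinates.
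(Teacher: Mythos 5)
Your reduction is correct, and it takes a genuinely different route from the paper's. The paper proves Lemma~\ref{lem:key_no_dropout} directly via a hybrid argument over the $r$ ciphertext slots, constructing intermediate distributions $H_0,\ldots,H_r$ that replace the $\Store$ ciphertexts with uniform values one at a time and reducing each step $H_{i-1}\approx H_i$ to a fresh HintMLWE instance with parameter $m$; your proof instead observes that Lemma~\ref{lem:key_no_dropout} is an affine post-processing of Corollary~\ref{cor:linear-leakage} and closes the argument in one shot. The key observations you need --- that $y_k$ exactly cancels the $-\sum_i\vlambda^k_i\vx_i$ offset in the leakage terms of Distribution~1, and that translating uniform $\vu_i$ by the fixed $\vx_i$ preserves uniformity while regenerating the $+y_k$ offset in Distribution~2 --- are both right, so the map $F$ is a perfect distribution-preserving transform in each case. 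The tradeoff between the two proofs is mild: your approach is shorter and shows that the $\vx_i$-dependence is purely cosmetic, but it inherits the corollary's use of a HintMLWE instance with the $r$ matrices stacked (so effectively $rm$ rows), whereas the paper's hybrid keeps the per-step HintMLWE parameter at $m$ at the cost of an explicit factor-$r$ hybrid loss; for lattice assumptions these end up morally equivalent, but the paper's version makes the loss explicit while yours defers it to the (already loosely-parameterized) corollary. One small thing worth stating explicitly in a write-up is that the $\vx_i$ are fixed, public vectors known to the reduction (they appear in the lemma's quantifier "for any"), which is what licenses $F$ being deterministic.
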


The proof of this lemma combines the simple transformation above with a hybrid argument.

\begin{proof}
    We begin by defining the hybrids $H_0,\ldots,H_r$. In particular, $H_i$ is the distribution
    \[
    \left(\vA_1,\ldots, \vA_r,\vu_1,\ldots,\vu_i,
    \vA_{i+1}\vs + \vx_{i+1} + T\ve_{i+1},\ldots,\vA_{r}\vs + \vx_{r} + T\ve_{r}, z_1,\ldots z_\ell\right)
    \]
    where for $k\in[\ell]$,
    \[z_k =  \sum_{i=1}^{r}\vlambda_i^{(k)}\left(-\sum_{j=1}^i \vu_j-\sum_{j=i+1}^r \vA_j\vs + T\vf_i^{k} + T\sum_{j=1}^i \ve_j-\sum_{j=i+1}^r \vx_j\right) + y_k\]
    and $\vu_1,\ldots,\vu_r\ugets R^m_q, \vA_i\ugets R^{m\times d}_q, \vs\gets D_{d,\sigma_1}, \ve_i\gets D_{m,\sigma_1}, \vf^{(j)}_i\gets D_{\sigma_2}$ for $i\in [r], j\in [\ell]$.

    Notice that $H_0$ coincides with distribution 1 and $H_r$ coincides with distribution 2. Now, we argue that for any $i\in [r]$, $H_{i-1}\approx H_i$ by the following reduction from $\HintMLWE$ (Definition~\ref{def:HintMLWE}).

    We can build a reduction $\cB$ that takes input $ (\vA,\vb,\vw^{(1)},\ldots,\vw^{(\ell)})$ from $\HintMLWE$ such that it outputs
    \[
     \left(\vA_1,\ldots, \vA_r,\vu_1,\ldots,\vu_{i-1},\vv_i,\vA_{i+1}\vs+\vx_{i+1}+T\ve_{i+1},\ldots,\vA_r\vs+\vx_r+T\ve_r,\hat{z}_1,\ldots, \hat{z}_\ell\right),
    \]
    where $\vv_i = \vb+\vx_i$ and
    \[
    \hat{z}_k = \sum_{i=1}^{r}\vlambda_i^{k}\left(-\vb + \vw^{(k)} -\sum_{j=1}^{i-1} \vu_j-\sum_{j=i+1}^r \vA_j\vs + T\sum_{j=1}^{i-1} \ve_j-\sum_{j=i}^r \vx_j\right)+y_k
    \]

    Consider that $\vb = \vA\vs + T\ve$, $\vw^{(i)} = T\ve + T\vf^{(i)}$ for $i \in [\ell]$, $\ve\gets D_{m,\sigma_1}, \vf^{(i)}\gets D_{\sigma_2}$. We rewrite using fresh variable names $\vA_i = \vA, \ve_i = \ve, \vf_i^{k} = \vf^{(i)}$. As a result, we have $\vv_i = \vA_i\vs + \vx_i + T\ve_{i}$ and

    \[
    \hat{z}_k = \sum_{i=1}^{r}\vlambda_i^{k}\left( -\sum_{j=1}^{i-1} \vu_j-\sum_{j=i}^r \vA_j\vs + T\vf_i^{k}+ T\sum_{j=1}^{i-1} \ve_j-\sum_{j=i}^r \vx_j\right)+y_k
    \]

    So the output of the reduction is the same as $H_{i-1}$.

    On the other hand, consider that $\vb = \vu, \vw^{(i)}= T\ve + T\vf^{(i)}$ for $i \in [\ell]$, $\ve\gets D_{m,\sigma_1}, \vf^{(i)}\gets D_{\sigma_2}$. We rewrite using fresh variable names $\vu_i = \vu, \ve_i = \ve, \vf_i^{k} = \vf^{(i)}$. As a result, we have $\vv_i = \vu_i + \vx_i$ and 

    \[
    \hat{z}_k = \sum_{i=1}^{r}\vlambda_i^{k}\left(-\sum_{j=1}^{i} \vu_j-\sum_{j=i+1}^r \vA_j\vs + T\vf_i^{k} + T\sum_{j=1}^{i} \ve_j-\sum_{j=i}^r \vx_j\right)+y_k
    \]

    Since $\vb = \vu$ is uniformly random, $\hat{z}_k$ follows exactly the distribution $H_i$. By the hardness assumption of $\HintMLWE$, for any $i \in [r], H_{i-1}$ and $H_i$ are indistinguishable for any PPT adversary. In particular, $H_{0}$ and $H_r$ are indistinguishable for any PPT adversary, which proves the lemma.
    
\end{proof}
\begin{theorem}
    Assuming that a semi-honest adversary corrupts at most an $\gamma$-fraction of any user cohort, in addition to the server, 
    The protocol given in Figures~\ref{fig:server_no_dropout},\ref{fig:client_no_dropout} securely implements the functionality in figure \ref{fig:ideal}.
\end{theorem}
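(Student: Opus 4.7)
The plan is to build a simulator $\Sim$ for the joint view of the corrupted parties---the server together with a static $\gamma$-fraction of each cohort---and establish indistinguishability from the real transcript by a hybrid argument. The simulator is given the public parameters, the inputs and random tapes of the corrupted clients, and, for each round $i$ with $\Mode_{i-1}=\Reveal$, the revealed value $v_{i-1}$ from the ideal functionality $\ideal$. For each round $i$ it must produce (i) the encrypted key re-share messages passing through the server between cohorts $C_{i-1}$ and $C_i$ and (ii) the server uploads $w^i_j$ for every honest client $C_{i,j}$. Key re-shares directed at honest recipients are trivially simulated by CPA security of the PKI-based encryption; the real content is in simulating the $w^i_j$'s.

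I would proceed through four hybrids. $H_0$ is the real execution. In $H_1$, for each cohort $C_i$ and each honest client $C_{i,j}$, resample the additive share $\vs_{i,j}$ as a fresh uniform element of $R_q$ subject to $\sum_{j\in C_i}\vs_{i,j}=\vs$, where $\vs$ is the persistent key implicitly defined by $C_1$. This is statistically identical to the real distribution conditioned on the high-probability event---under the $\gamma<1$ bound and sufficiently large resharing parameter $d$---that each honest client both receives at least one share from an honest sender and sends at least one share to an honest recipient, by the standard DC-net style argument on additive secret sharing. In $H_2$, invoke Lemma~\ref{lem:key_no_dropout} with the program's coefficients $\vlambda^i$ to replace the aggregated honest $\Store$ uploads $\sum_{j\text{ honest}} w^i_j$ by uniform values, while the aggregated honest $\Reveal$ uploads are simulated as the appropriate linear combinations plus fresh Gaussian hints so that the server's decryption still produces the correct sum of honest inputs $\sum_{j\text{ honest}} \vx_{i,j}$. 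In $H_3$, substitute these sums of honest inputs by the values extracted from $v_{i-1}$ by subtracting off the (known) corrupted contributions and the state-dependent terms, yielding a transcript that $\Sim$ can produce from its available information alone.

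Indistinguishability of $H_0 \approx H_1$ is statistical; $H_1 \approx H_2$ is the main cryptographic step, a direct reduction to Lemma~\ref{lem:key_no_dropout}; and $H_2 \approx H_3$ is a syntactic rewrite once the linear relationship between the ideal output $v_{i-1}$ and the sum of honest inputs plus the corrupted-known part is made explicit. The hard part will be the reconciliation in $H_2$: Lemma~\ref{lem:key_no_dropout} speaks about aggregate ciphertexts across all parties in each round, whereas the adversary observes each honest $w^i_j$ individually. I would handle this by exploiting that, in $H_1$, the honest shares $\{\vs_{i,j}\}_{j\text{ honest}}$ are uniform subject only to their sum: one can designate a single honest ``aggregate holder'' per cohort whose upload absorbs the full aggregate structure and sample the remaining honest clients' uploads as independent uniformly random vectors in $R_q^m$, so that after server aggregation the per-round view reduces exactly to the aggregate considered in the lemma. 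The remainder of the argument is routine bookkeeping that chains this transformation across all $r$ rounds, simultaneously for the $\Store$ and $\Reveal$ instructions of the program $P$.
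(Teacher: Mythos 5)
Your hybrid skeleton matches the paper's at a high level---statistical handling of client-to-client resharing, reduction of the per-round view to a single aggregate ciphertext, and then a direct application of Lemma~\ref{lem:key_no_dropout}---but there is a real gap in the way you justify the ``aggregate holder'' step, and you omit one bookkeeping device the paper needs.

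The gap: you claim that, because in $H_1$ the honest shares $\{\vs_{i,j}\}_{j\text{ honest}}$ are uniform subject only to their sum, one may ``sample the remaining honest clients' uploads as independent uniformly random vectors in $R_q^m$.'' This is not a statistical consequence of share uniformity. The honest upload is $w^i_j = M_i\vs_{i,j} + T g^i_j\,(+\,x^i_j)$, where $M_i$ is an $m\times d$ matrix with $m>d$ and $g^i_j$ is a \emph{small} Gaussian: $M_i\vs_{i,j}$ lives in a $d$-dimensional image and the noise does not wash it out, so $w^i_j$ is nowhere near uniform over $R_q^m$. Replacing the $t_i-1$ ``non-anchor'' honest uploads per round with independent uniform elements is itself a sequence of computational reductions to (plain) MLWE, applied once per honest client per round; this is exactly what the paper's Hybrid~1 (and the analogous step for the $\beta$'s) does with the sub-hybrids $H_i^1,\dots,H_i^{t_i}$. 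Your proposal elides this and would let a naive reader believe the step is information-theoretic. Without it, you cannot even state the aggregate as the object Lemma~\ref{lem:key_no_dropout} speaks about, so your $H_1\approx H_2$ transition does not go through as written.

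A smaller omission: the paper first passes to the ``lazy protocol'' in which all linear-combination evaluation is deferred to the reveal step and expressed via the composed coefficients $\bar{\vlambda}^i$. This is needed because the functionality's $\lambda_{i,k}$ coefficients reference the stored state recursively (i.e.~the reveal at time $i$ can depend on earlier $\Store$ values that themselves incorporated earlier linear combinations), whereas Lemma~\ref{lem:key_no_dropout} is phrased in terms of a flat sequence of encryptions $\vA_k\vs+T\ve_k+\vx_k$ and one-shot linear functionals $\vlambda^{(j)}$ of that flat sequence. Without the change of variables to $\bar{\vlambda}^i$ you cannot directly map the real protocol's reveal messages onto the form the lemma expects. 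Once you add the explicit MLWE sub-hybrids to replace non-anchor uploads by uniform, and the lazy-protocol reindexing, your argument aligns with the paper's; the other steps ($H_0\approx H_1$ via the DC-net argument, and the final syntactic rewrite using the ideal output $v_{i-1}$) are sound and match what the paper does.
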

The proof follows more or less immediately from Lemma~\ref{lem:key_no_dropout}.

\begin{proof}[Sketch]
    For simplicity, we assume secure point-to-point channels between clients in successive cohorts. This assumption is easily relaxed in usual manner by introducing key infrastructure. Also for simplicity we assume the inputs are statically chosen, although the proof can easily be modified to securely simulate a reactive version of the functionality.

    {\bf The lazy protocol.} In this proof we consider, without loss of generality, a \emph{lazy protocol} where all linear function evaluation is pushed to the time of reveal. The lazy server simply appends the sum of the clients' $\Store$ messages to it's state, instead of also adding a linear combination of the prior states. Then, when it is time to reveal the server and clients recursively compute the appropriate linear function, $\bar{\vlambda}^i$, consistent with all real linear operations that were to have taken place up to this point, $\vlambda^1,\ldots,\vlambda^{i-1}$. By linearity, this too is just another linear function that can be computed by copying the state and performing the sequence of necessary linear evaluations or, alternatively, simply evaluating the with $\bar{\vlambda}^i$ where $\bar{\vlambda}^i_j:= \sum_{j=k_1<k_2<\cdots<k_\ell=i}\vlambda^{k_\ell}_{k_{\ell-1}}\vlambda^{k_{\ell-1}}_{k_{\ell-2}}\cdots \vlambda^{k_3}_{k_2}\vlambda^{k_2}_{k_1}$ for $j<i$ and $\bar{\vlambda}^i_i = 1$.

    For example if the sum of inputs at round $i$ is $x_i$, then the real server has state
    \[
       \vv= (\underbrace{x_1}_{v_1},\underbrace{x_2 + \lambda^2_1v_1}_{v_2}, \underbrace{x_3+\lambda^3_2v_2+\lambda^3_1v_1}_{v_3},\ldots)
    \]
    and when asked to reveal simply outputs $v_i$.

    In the lazy protocol, the state is simply
    \[ \bar{\vv} = (x_1,x_2,x_3,\ldots) \]
    and when asked to reveal the output is computed as $\langle\bar{\vlambda}^i,\bar{\vv}\rangle = \sum_{j=1}^i \bar{\vlambda}^i_j x_j$.

    {\bf Simulating client-to-client communication.}
    We begin by noting that if an adversary controls at most an $\gamma$ fraction of the next cohort, $C^{i+1}$, the probability that any specific client in cohort $C^i$ who sends messages to $d$ i.i.d.~randomly chosen clients in $C^{i+1}$ fails to send a message to an honest client is $\gamma^d$. If there are a total of at most $r$ cohorts, each containing $n$ clients, then for $d \geq \frac{\log(1/\delta) + \log(2nr)}{1-\gamma}$ every honest client sends a message to some other honest client and receives at least one message from some other honest client with probability at least $1-2nre^{d(\gamma-1)}\ge 1-\delta$. 
    
    In this case, because messages from honest clients to other clients are perfect additive secret shares as we assume point-to-point channels between clients, the adversary's view of such messages is uniformly random and independent of all other communication in the protocol seen by the adversary. Thus, we can focus exclusively on simulating messages to the corrupt server (which comprises all other messages from honest clients seen by the server). Next, we describe how to simulate these messages.

    {\bf Simulating messages to the adversarial server.}
    Let $\hat{\vv_i}$ denote the state of the idealized functionality at round $i$ if only adversarial inputs are incorporated. Let $\hat{\vv_i}'$ denote the state of the idealized functionality using only honest outputs at round $i$, then the state at round $i$ (incorporating both honest and adversarial inputs) is simply $\hat{\vv_i}+\hat{\vv_i}'=\vv_i$. Recall, additionally, that if there is a $\Reveal$ operation at round $i$, then the idealized functionality will output $\vv_i$ at the next round.

    The simulator is as follows:
    \begin{enumerate}
        \item The messages sent by honest parties when $\Mode_i=\Store$ are simply uniformly random.
        \item When $\Mode_{i}=\Reveal$, the simulator additionally sends random values (in the next round) such that when summed with the contribution from server, $y_i = \langle\bar{\vlambda}^i,\bar{\vv}\rangle$, where $\bar{\vv} = (x_1,x_2,x_3,\ldots)$ is the server's state at time $i$, and the messages from the corrupt parties, the result is $y_i+T\sum_{j=1}^{t'_i} \langle\bar{\vlambda}^i,\vf^{i,j}\rangle$ where $\vf^{i,j}$ are sampled according to $ D_{\sigma_1} + D_{\sigma_2}$.%
    \end{enumerate}
    We now argue that this simulation is indistinguishable from the real protocol using a hybrid argument. However before continuing, we set up some notation that will be consistent across the hybrid argument. Suppose there are $r$ instructions with $\Mode = \Store$ and $\ell$ instructions with $\Mode = \Reveal$. We denote the view of the adversarial server (rearranged) as $$(\vA_1, \dots,\vA_{r},\alpha^1_1,\ldots,\alpha^1_{t_1},\ldots, \alpha^r_1,\ldots,\alpha^r_{t_r},\beta^1_1,\ldots,\beta^1_{t'_1},\ldots,\beta^\ell_1\ldots,\beta^\ell_{t'_\ell})$$ where $\alpha^i_j$ denotes the $\Store$ message sent by the $j$th honest client in round $i$ (here the honest clients are indexed from $1$ to $t_i$) and $\beta^i_j$ denotes the message sent by the $j$th honest client in round $i$ if $\Mode_{i-1}=\Reveal$ (here the honest clients are indexed from $1$ to $t_i'$). Caution: the distribution of these variables will depend on the specific hybrid.

    So the simulator sets all $\alpha^i_j$ to be uniformly random and independent. Similarly, $\beta^i_j$ are uniformly random conditioned on $\sum_{j=1}^{t'_i} \beta^i_j = y_i+T\sum_{j=1}^{t'_i} \langle\bar{\vlambda}^i,\vf^{i,j}\rangle - \vc_i$, where $\vc_i$ is the message from the dishonest parties.

    We now define the sequence of hybrids:
    
    {\bf Hybrid 0.} $H_0$ is the adversary's real view. We can rewrite how this is sampled as follows:
    \begin{enumerate}
        \item Sample $\vA_i \ugets R_q^{m\times d}$ for all $i \in [r]$.
        \item Sample $\vs\ugets R_q^m$.
        \item Sample $\ve_j^{i} \gets D_{m,\sigma_1}$ for all $i\in [r], j\in [t_i]$ and $\vf_{k,j}^{i}\gets D_{\sigma_2}$ for all $i\in [\ell], j \in [t_i'], k \in [r]$.%
         \item For all $i \in [r], \vs_j^{i}$ are uniformly random variables conditioned on $\sum_{j=1}^{t_i} \vs_j^{i}+ \sum_{k \text{ corrupt}} \vs_k^{i} = \vs$, where $\vs_k^{i}$ for corrupt $k$ are computed according to the protocol.
        \item For all $j \in [t_i]$, set $\alpha^i_j = \vA_i\vs_j^i+T\ve_j^{i} + x_j^{i}$, where $x_j^{i}$ is the secret input of the $j$th honest party in round $i$.
        \item For all $j \in [t_i']$, set $\beta^i_j = \sum_{k=1}^{r}\bar{\vlambda}^{i}_k(T\vf^{i}_{k,j}-\vA_k\vs_j^{i})$, where $\bar{\vlambda}^{i} \in \bbZ_q^m$ is the weights applied in round $i$.
        \item $\hat{\alpha}^i_k,\hat{\beta}^i_k$ for corrupt $k$ are computed according to protocol and prior messages.
    \end{enumerate}
    
    {\bf Hybrid 1.} In this hybrid, we can equivalently sample this by first sampling uniformly random $\alpha_j^i \ugets R_q^m$ for all $i \in [r]$ and $j\geq 2$. Then we set $\alpha_1^i = \sum_{j=1}^{t_i} \vA_i s^{i}_j + T\ve^{i}_j + x^{i}_j - \sum_{j=2}^{t_i} \alpha_j^i$ for all $i \in [r]$.

    Notice that all clients need to follow the protocol regardless of whether honest or not. So the invariant $\sum_{j=1}^{t_i} \vA_i s^{i}_j + \sum_{k \text{ corrupt}} \vA_is_k^i = \vA_i \vs$ always holds. So we can rewrite $\alpha_1^i$ as
    \[
    \vA_i\vs + T\ve_i + \vx_i - \vc_i - \sum_{j=2}^{t_i} \alpha_j^i
    \]
    where $\ve_i =  \sum_{j=1}^{t_i} \ve^{i}_j$, $\vx_i = \sum_{j=1}^{t_i} \vx^{i}_j$, and $\vc_i = \sum_{k \text{ corrupt}} \vA_is_k^i$.
    
    To show this hybrid is indistinguishable from {\bf Hybrid 0}, we define the following partial hybrids $H_i^1, \dots, H_i^{t_i}$ for each $i \in [r]$. In particular, $H_i^j$ is the distribution:
    \[
    \left(\vA_i,\sum_{k=1}^j \vA_i\vs^i_k + \vx^i_k +T\ve_k^i - \sum_{k=2}^{j}\vu_k, \vu_2,\ldots,\vu_j,
    \vA_{i}\vs^i_{j+1} + \vx^i_{j+1} + T\ve^i_{j+1},\ldots,\vA_{i}\vs^i_{t_i} + \vx^i_{t_i} + T\ve^i_{t_i}\right)
    \]

    where $\vu_2,\ldots,\vu_{t_i}\ugets R^m_q, \vA_i\ugets R^{m\times d}_q, \vs\ugets R_q^d, \ve^i_{1},\dots,\ve^i_{t_i}\gets D_{m,\sigma_1}$, and $\vs_j^{i}$ are uniformly random variables conditioned on $\sum_{j=1}^{t_i} \vs_j^{i}+ \sum_{k \text{ corrupt}} \vs_k^{i} = \vs$, where $\vs_k^{i}$ for corrupt $k$ are computed according to the protocol.

    Notice that $H_i^1$ coincides with the partial view in {\bf Hybrid 0} and $H_i^{t_i}$ coincides with the partial view in {\bf Hybrid 1}. Now, we argue that for any $j\in \{2,\dots,t_i\}$, $H^{j-1}_{i}\approx H^{j}_i$ by the following reduction from $\MLWE$ (Definition~\ref{def:MLWE}).

     We can build a reduction $\cB$ that takes input $ (\vA,\vb)$ from $\MLWE$ such that it outputs
    \[
     \left(\vA_i,\vz, \vu_2,\ldots,\vu_{j-1},\vv_j,\vA_{i}\vs^i_{j+1} + \vx^i_{j+1} + T\ve^i_{j+1},\ldots,\vA_{i}\vs^i_{t_i} + \vx^i_{t_i} + T\ve^i_{t_i}\right),
    \]
    where $\vv_j = \vb+\vx^i_j$ and
    \[
    \vz = \sum_{k=1}^{j} \vA_i\vs^i_k + \vx^i_k +T\ve_k^i - \sum_{k=2}^{j-1}\vu_k - \vb - \vx_j^i
    \]

    Consider that $\vb = \vA\vs + T\ve$, $\ve\gets D_{m,\sigma_1}$. We rewrite using fresh variable names $\vA_i = \vA, \vs_{j}^i = \vs, \ve^i_j = \ve$. As a result, we have $\vv_j = \vA_i\vs_j^i + \vx_j^i + T\ve^{i}_j$ and

    \[
    \vz = \sum_{k=1}^{j-1} \vA_i\vs^i_k + \vx^i_k +T\ve_k^i - \sum_{k=2}^{j-1}\vu_k
    \]

    So the output of the reduction is the same as $H_i^{j-1}$.

    On the other hand, consider that $\vb = \vu$. We rewrite using fresh variable names $\vu_j = \vu$. As a result, we have $\vv_j = \vu_j + \vx_j^i$ and 
    
    \[
    \vz = \sum_{k=1}^{j} \vA_i\vs^i_k + \vx^i_k +T\ve_k^i - \sum_{k=2}^{j}\vu_k - \vx_j^i
    \]

    Since $\vb = \vu$ is uniformly random, $\vz$ follows exactly the distribution $H_i^j$. By the hardness assumption of $\MLWE$, for any $j\in \{2,\dots,t_i\}$, $H_i^{j-1}$ and $H_i^j$ are indistinguishable for any PPT adversary. In particular, $H_{i}^1$ and $H_i^{t_i}$ are indistinguishable for any PPT adversary. 

    Combining all the partial hybrids, we can show that any PPT adversary cannot distinguish between {\bf Hybrid 0} and {\bf Hybrid 1}.

    {\bf Hybrid 2.} In this hybrid, we set $\alpha_i^1 = \vA_i\vs + T\ve_i + \vx_i - \vc_i$ for all $i \in [r]$. This hybrid is indistinguishable from {\bf Hybrid 1} by the pseudorandomness of $\MLWE$ samples.

    {\bf Hybrid 3.} In this hybrid, we can equivalently sample the distribution in the previous hybrid by first sampling uniformly random $\beta_{j,k}^i \ugets R_q^m$ for all $i \in [\ell], k \in [r]$ and $j >2$. Then we set
    $\beta_j^i = \sum_{k = 1}^r\bar{\vlambda}_k^i\beta_{j,k}^i$ for $j > 2$, and $\beta_i^1 =\sum_{j=1}^{t_i'}\sum_{k=1}^r\bar{\vlambda}^{i}_k(T\vf^{i}_{k,j}-\vA_k\vs^{i}_j) - \sum_{j=2}^{t_j'} \beta_j^{i}$ for all $i\in [\ell]$. By the same invariant, we can rewrite $\beta_i^1$ as
    \[
        \beta_i^1 = \sum_{k=1}^r\bar{\vlambda}^i_k(T\vf^{i}_k-\vA_k\vs)-\vc_i' - \sum_{j=2}^{t_j'} \beta_j^{i}
    \]
    where $\vf_k^i = \sum_{j=1}^{t_i'} \vf_{k,j}^{i},$ and $c_i' = \sum_{j=1}^r\sum_{k \text{ corrupt}} -\bar{\vlambda}^i_j\vA_js_k^i = -\sum_{i=1}^r \bar{\vlambda}^i_j\vc_i$.

    We can apply the same partial hybrid technique to show that {\bf Hybrid 2} and {\bf Hybrid 3} are indistinguishable.

    {\bf Hybrid 4.} In this hybrid, we set $\beta_i^1 = \sum_{k=1}^r\bar{\vlambda}^i_k(T\vf^{i}_k-\vA_k\vs)-\vc_i'$, and sample uniformly random $\beta_j^i \ugets R_q^m$. This hybrid is indistinguishable from {\bf Hybrid 3} by the pseudorandomness of $\MLWE$ samples and by the fact that $\bar{\vlambda}^i \in \bbZ_q^m \setminus \{\bf 0\}$ and $q$ is a prime.

    After {\bf Hybrid 4}, the view becomes
    \begin{align*}
    (\vA_1, \dots,\vA_{r},\alpha_{1}^1:=\vA_1\vs+T\ve_1+\vx_1-\vc_1,\alpha^{1}_{2},\ldots,\alpha^{1}_{t_1},\ldots,\alpha_1^r:=\vA_r\vs+T\ve_r+\vx_r-\vc_r,\alpha^{r}_{2},\ldots,\alpha^{r}_{t_r},\\\beta_{1}^1:=\sum_{i=1}^r\bar{\vlambda}_i^{1}(T\vf_i^{1}-\vA_i\vs + \vc_i), \beta_{2}^1,\ldots,\beta_{t_r'}^{1},\ldots, \beta_{1}^{\ell}:=\sum_{i=1}^r\bar{\vlambda}_i^{\ell}(T\vf_i^{\ell}-\vA_i\vs+\vc_{\ell}), \beta_{2}^{\ell},\ldots,\beta_{t_r'}^{\ell}
    )
    \end{align*}
    where $\alpha_{j}^i$ and $\beta_{k}^i$ are independent uniformly random variables over $R_q^m$ for all $j >2$ and $k > 2$.
    
    {\bf Hybrid 5.} In this hybrid, we sample uniformly random $\alpha_1^i$ for all $i \in [r]$, and set $\beta_1^k$ to $\sum_{i=1}^r\bar{\vlambda}_i^{k}(T\vf_i^{k}+T\ve_i -\alpha_1^i) + \langle\bar{\vlambda}^i,\vx\rangle$ for all $k \in [\ell]$. Because all $\alpha_1^i, \beta_1^k$ are independent of other random variables. So by Lemma~\ref{lem:key_no_dropout}, this hybrid is indistinguishable from {\bf Hybrid 4}.

    The last hybrid {\bf Hybrid 5} can be computed from the simulator's input. Therefore, the security claim follows.

\end{proof}

\end{document}